\newtheorem{theorem}{Theorem}
\newtheorem{lemma}[theorem]{Lemma}
\newtheorem{example}[theorem]{Example}
\newcommand{\gf}{{\mathrm{GF}}}
\newcommand{\wt}{{\mathtt{wt}}}
\newcommand{\GRM}{{\mathrm{GRM}}}
\newcommand{\Z}{\mathbb{{Z}}}
\newcommand{\m}{\mathbb{M}}
\newcommand{\C}{{\mathcal{C}}}
\newcommand{\bc}{{\mathbf{c}}}
\newcommand{\bu}{{\mathbf{u}}}
\newcommand{\bv}{{\mathbf{v}}}
\newcommand{\0}{\textbf{0}}
\newcommand{\Rmnum}[1]{\expandafter\@slowromancap\romannumeral #1@}
\begin{document} 

\title{A Generalization of the Tang-Ding Binary Cyclic Codes
\thanks{
Z. Sun's research was supported by The National Natural Science Foundation of China under Grant Number 62002093. S. Zhu's research was supported by The National Natural Science Foundation of China under Grant Number Nos. 12171134 and U21A20428.}}

\author{Zhonghua Sun\thanks{School of Mathematics, Hefei University of Technology, Hefei, 230601, Anhui, China. Email:  sunzhonghuas@163.com}, 
\and Ling Li \thanks{School of Mathematics, Hefei University of Technology, Hefei, 230601, Anhui, China. Email: lingli2310@163.com  }, \and Shixin Zhu \thanks{School of Mathematics, Hefei University of Technology, Hefei, 230601, Anhui, China. Email: zhushixin@hfut.edu.cn }
}

\maketitle

\begin{abstract} 
Cyclic codes are an interesting family of linear codes since they have efficient decoding algorithms and contain optimal codes as subfamilies. Constructing infinite families of cyclic codes with good parameters is important in both theory and practice. Recently, Tang and Ding [IEEE Trans. Inf. Theory, vol. 68, no. 12, pp. 7842--7849, 2022] proposed an infinite family of binary cyclic codes with good parameters. Shi et al. [arXiv:2309.12003v1, 2023] developed the binary Tang-Ding codes to the $4$-ary case. Inspired by these two works, we study $2^s$-ary Tang-Ding codes, where $s\geq 2$. Good lower bounds on the minimum distance of the $2^s$-ary Tang-Ding codes are presented. As a by-product, an infinite family of $2^s$-ary duadic codes with a square-root like lower bound is presented.  
 
\vspace*{.3cm}
\noindent 
{\bf Keywords:} cyclic code, duadic code, BCH bound.
\end{abstract}

\section{Introduction, motivations and objectives}\label{sec:1}
%\subsection{Cyclic codes}

Let $q$ be a prime power, $\gf(q)$ be the finite field with $q$ elements and let $\gf(q)^n$ denote the $n$-dimensional {\it row vector space} over $\gf(q)$. Every nonempty subspace $\C$ of $\gf(q)^n$ is called a $q$-ary {\it linear code} of length $n$. If $\C$ has dimension $k$ and minimum distance $d$, $\C$ is called a $q$-ary $[n, k, d]$ linear code. If a lower bound on the minimum distance $d$ of the code $\C$ is lower bounded by $c \sqrt{n}$ for a fixed constant $c$ and fix $q$, this lower bound is then called a {\it square-root-like lower bound}. If any $(c_0,c_1,\ldots, c_{n-1})\in \C$ implies that $(c_{n-1}, c_0,\ldots, c_{n-2})\in \C$, $\C$ is called a $q$-ary {\it cyclic code} of length $n$. Define 
\begin{align*}
\Phi:~\gf(q)^n &\rightarrow \gf(q)[x]/(x^n-1)\\
(c_0,c_1,\ldots, c_{n-1}) &\mapsto	c_0+c_1x+\cdots+c_{n-1}x^{n-1}
\end{align*}
and $\Phi(\C)=\{ \Phi(\bc):~\bc \in \C  \}$. It is known that $\gf(q)[x]/(x^n-1)$ is a {\it principal ideal} ring. The code $\C$ is cyclic if and only if $\Phi(\C)$ is an {\it ideal} of $\gf(q)[x]/(x^n-1)$. In the following, we will identify $\C$ with $\Phi(\C)$ for any $q$-ary cyclic code $\C$ of length $n$. Let $\C=(g(x))$ be a $q$-ary cyclic code of length $n$, where $g(x)$ is monic and has the smallest degree among non-zero codewords of $\C$. Then $g(x)$ is called the {\it generator polynomial} and $h(x)=(x^n-1)/g(x)$ is referred to as the {\it check polynomial} of $\C$. Cyclic codes over finite fields have a cyclic structure, resulting in good encoding and decoding algorithms. Cyclic codes also contain many infinite families of optimal codes as subfamilies. When $(m, q-1)=1$, there exists a $q$-ary $[(q^m-1)/(q-1),(q^m-1)/(q-1)-m, 3]$ cyclic Hamming code \cite[Theorem 5.1.4]{HP2003}. Several infinite families of distance-optimal cyclic codes were documented in \cite{DH13,LKZL,LLHDT,SLNS,XCX}. Recently, several infinite families of $q$-ary ($q\in \{2, 3, 4\}$) cyclic codes with a square-root-like lower bound were constructed in \cite{CDLS,LLD,LLQ, STKS2023,SOS,sun23,TD22}. However, very limited results on $q$-ary ($q>3$) cyclic codes with a square-root-like lower bound are known. Our first objective is to construct an infinite family of $2^s$-ary cyclic codes with a square-root-like lower bound.

%Constructing infinite family of cyclic codes with good parameters are interesting and important. 

Let $\C \subset \gf(q)^n$, the dual of $\C$ is defined by 
\begin{equation*}
\C^\perp =\{\bu \in \gf(q)^n:\bu \bv^t=0,~{\rm for ~all}~\bv \in \C \}.
\end{equation*}
Two motivations of this paper are documented as follows: 
\begin{enumerate}
\item If $\C\cap \C^\perp=\{ \0\}$, $\C$ is called a {\it linear complementary dual} (LCD) code. Binary LCD codes play an important role in armoring implementations against side-channel attacks and fault injection attacks \cite{CG16}. Carlet et al. \cite{CMTQP} showed that if a $q$-ary $[n, k, d]$ code ($q>3$) exists there is a $q$-ary $[n, k, d]$ LCD code. According to \cite[Proposition 3]{CG16}, $2^s$-ary LCD codes can be transformed into binary LCD codes by expansion. For more information on LCD codes, we refer the reader to \cite{CMTQP,CH,LDL,SOXS}. Motivated by the importance of cyclic codes and LCD codes in both theory and practice, our second objective is to construct an infinite family of $2^s$-ary cyclic LCD codes with a square-root-like lower bound. 
\item  If $\C=\C^\perp$, $\C$ is called a {\it self-dual} code.	 Self-dual codes have a rich mathematical theory, and are related to lattices \cite{leech}, invariant theory \cite{NRS} and $t$-designs \cite{DT}. Sloane and Thompson characterized binary cyclic self-dual codes \cite{ST}. Nedeloaia \cite{Nede}, Heijne and Top \cite{HT2009} constructed binary cyclic self-dual codes. Recently, Chen \cite{CHEN} constructed several infinite families of $2^s$-ary cyclic self-dual codes with a square-root like lower bound. Kai et al. \cite{KZ08} and Jia et al. \cite{JLX} independently proved that $q$-ary cyclic self-dual codes of length $n$ exist if and only if $n$ is even and $q=2^s$ with a positive integer $s$. For more information on the theory of self-dual codes, we refer the reader to \cite[Chapter 9]{HP2003} and \cite[Chapter 19]{MS77}. Motivated by the importance of self-dual codes in both theory and practice, our third objective is to construct an infinite family of self-dual codes with a square-root-like lower bound by extending a family of cyclic codes.       	
\end{enumerate}

%Motivated by the important of cyclic codes in both theory and practice, our first objective is to construct an infinite of cyclic codes with good parameters. 

%For more information on theory of cyclic codes, we refer the reader to \cite[Chapter 4]{HP2003}. 

Let $n$ be a positive integer and $\gcd(n, q)=1$. Let $\Z_{n}=\left\{0,1,2,\cdots,n-1 \right\}$ be the ring of integers modulo $n$. For any integer $i$, let $i \bmod n$ denote the unique integer $j$ such that $0 \leq j \leq n-1$ and $i-j$ is divisible by $n$ throughout this paper. For any $i \in \Z_{n}$, the \emph{$q$-cyclotomic coset of $i$ modulo $n$} is defined by 
\[C^{(q,n)}_i=\left\{i q^j \bmod {n}: 0\leq j\leq \ell_i-1 \right\}  \subset \Z_{n}, \]
where $\ell_i$ is the smallest positive integer such that $i \equiv i q^{\ell_i} \pmod{n}$. The smallest integer in $C_i^{(q, n)}$ is called the {\it coset leader} of $C_i^{(q, n)}$. Let $\Gamma$ be the set of all the coset leaders, then
\begin{equation*}
\Z_n=\bigcup_{i \in \Gamma} C_i^{(q, n)}	
\end{equation*}
and $C_i^{(q, n)} \cap C_j^{(q, n)}=\emptyset$ for any two distinct elements $i$ and $j$ in $\Gamma$. Let $\beta$ be a primitive $n$-th root of unity and let $\m_{\beta^i}(x)$ denote the minimal polynomial of $\beta^i$ over $\gf(q)$, then
\begin{equation*}
	x^n-1=\prod_{i\in \Gamma} \m_{\beta^i}(x),
\end{equation*}
which is the factorization of $x^n-1$ over $\gf(q)$. For a $q$-ary nonzero cyclic code $\C=(g(x))$ of length $n$, the set $T(\C)=\{i \in \Z_n:~g(\beta^i) =0 \}$ is called the {\it defining set} of $\C$ with respect to the primitive $n$-th root of unity $\beta$. Clearly, the defining set $T(\C)$ is the union of some $q$-cyclotomic cosets modulo $n$. 

Let $\emptyset \neq S\subset \Z_n$ and $v \in \Z_n $, define $v S=\{v i \bmod{n}:~i \in S \}$. Let $S_1$ and $S_2$ be two subsets of $\Z_n$ such that 
\begin{enumerate}
\item $S_1\cap S_2 =\emptyset$ and $S_1 \cup S_2 =\Z_n \backslash \{0 \}$.
\item Both $S_1$ and $S_2$ are the union of some $q$-cyclotomic cosets modulo $n$.	
\end{enumerate}
If there is a unit $v\in \Z_n$ such that $v S_1=S_2$ and $v S_2 =S_1$, then $(S_1, S_2, v)$ is called a {\it splitting} of $\Z_n$. Let $(S_1,S_2, v)$ be a splitting of $\Z_n$. The pair of $q$-ary cyclic codes $\C_1$ and $\C_2$ of length $n$ with defining set $S_1$ and $S_2$ with respect to the $n$-th primitive root of unity $\beta$ are called {\it odd-like duadic codes}, and the pair of $q$-ary cyclic codes $\widetilde{\C_1}$ and $\widetilde{\C_2}$ of length $n$ with defining set $S_1\cup \{0\}$ and $S_2\cup \{0\}$ with respect to the $n$-th primitive root of unity $\beta$ are called {\it even-like duadic codes}.

%If $\C$ is a $q$-ary cyclic code with length $n$ and defining set $T(\C)$, then $\C^\perp$ is a $q$-ary cyclic code with length $n$ and defining set $\Z_n \backslash -T(\C)$. 

Duadic codes are a family of interesting cyclic codes that contain quadratic residue codes as subfamilies. The binary duadic codes were introduced by Leon, Masley and Pless \cite{LMP}, and their properties were studied in \cite{PML}. Arbitrary duadic codes were studied by Smid \cite{smid}. Currently, some infinite families of duadic codes have been constructed using cyclotomy and other methods (see \cite{D2012,DP99,LLQ,STKS2023,TNH,TD22}). Odd-like duadic codes have the following interesting properties. 
\begin{enumerate}
\item If $-1$ gives the splitting for a pair of odd-like duadic codes $\C_1$ and $\C_2$,	the extended codes of $\C_1$ and $\C_2$ are self-dual \cite[Theorem 6.4.12]{HP2003}.
\item The minimum odd-like weight $d_{o}$ of an odd-like duadic code of length $n$ obeys $d_{o}\geq \sqrt{n}$ \cite[Theorem 6.5.2]{HP2003}.  
\end{enumerate}
It is well-known that odd-like quadratic residue codes have a square-root-like lower bound. Tang and Ding \cite{TD22} constructed an infinite family of binary odd-like duadic codes of length $2^m-1$ with a square-root-like lower bound, where $m\geq 3$ is odd. Inspired by their work, Liu et al. \cite{LLQ} constructed another infinite family of binary odd-like duadic codes of length $2^m-1$ with a square-root-like lower bound, and Shi et al. \cite{STKS2023} constructed an infinite family of quaternary odd-like duadic codes of length $4^m-1$ with a square-root-like lower bound. According to \cite[Chapter 6]{HP2003}, the minimum distance of odd-like duadic codes may be even-like, and the minimum distance of many duadic codes is poor. Our fourth objective is to construct an infinite family of $2^s$-ary duadic codes of length $2^{sm}-1$ with a square-root-like lower bound. Our results to some extent answer a open problem raised by Shi et al \cite{STKS2023}.  

The rest of this paper is organized as follows. In Section \ref{sec:2}, we determine the dimension of the $2^s$-ary Tang-Ding codes. In Section \ref{sec:3}, we lower bound the minimum distance of the $2^s$-ary Tang-Ding codes. In Section \ref{sec:4}, we conclude this paper and make some concluding remarks.

\section{The $2^s$-ary Tang-Ding codes}\label{sec:2}

Throughout this paper, let $q=2^s$, where $s$ is a positive integer. Let $n=q^m-1$, where $m\geq 2$ is an integer. Let $\beta$ be a primitive element of $\gf(q^m)$. Below we introduce the $q$-ary Tang-Ding codes, which is a generalization of \cite{STKS2023} and \cite{TD22}.

Let $0\leq i\leq n$, the {\it $q$-adic expansion} of $i$ is defined to be $i=\sum_{j=0}^{m-1} i_j q^j$, where $0\leq i_j \leq q-1$. The {\it $q$-weight} of $i$, denoted by $\wt_q(i)$, is defined to be $\sum_{j=0}^{m-1} i_j$. It is easily verified that $\wt_q(i)$ is a constant on each cyclotomic coset $C_j^{(q, n)}$. Define 
\begin{equation*}
T_{(q, m;0)}=\{ 1\leq i\leq n-1:~\wt_q(i) \equiv 0\pmod{2}\}	
\end{equation*}
and 
\begin{equation*}
T_{(q, m;1)}=\{ 1\leq i\leq n-1:~\wt_q(i) \equiv 1\pmod{2}\}	.
\end{equation*}
It is easily verified that the following hold.
\begin{enumerate}
\item $T_{(q, m; 0)}$ and $T_{(q, m; 1)}$ are the union of some $q$-cyclotomic coset modulo $n$.
\item $T_{(q, m; 0)}\cap T_{(q, m; 1)}=\emptyset$.
\item $0\notin T_{(q, m; 0)}\cup T_{(q, m; 1)}$.
\item $\Z_n=\{0\} \cup T_{(q, m; 0)} \cup T_{(q, m; 1)} $.
\end{enumerate}
For each $i \in \{0,1\}$, let $\C_{(q, m; i)}$ (resp. $\widetilde{\C_{(q, m; i)}}$)  be the $q$-ary cyclic code with length $n$ and defining set $T_{(q, m; i)}$ (resp. $T_{(q, m; i)} \cup \{0\}$) with respect to the $n$-th primitive root of unity $\beta$. The pair of codes $\C_{(q, m; 0)}$ and $\C_{(q, m; 1)}$ are called $q$-ary Tang-Ding codes. Specifically, if $q=2$, the codes $\C_{(q, m; 0)}$ and $\C_{(q, m; 1)}$ were first studied by Tang and Ding \cite{TD22}; if $q=4$, the codes $\C_{(q, m; 0)}$ and $\C_{(q, m; 1)}$ were studied by Shi et al. \cite{STKS2023}. To study the $q$-ary Tang-Ding codes, we need the following lemma.
% From now on we always assume that $q=2^s\geq 4$.

\begin{lemma}\label{lem:1}
Let $q=2^s$, $n=q^m-1$ and $m\geq 2$.	The following hold.
\begin{enumerate}
\item If $m\geq 3$ is odd, $- T_{(q, m; 0)}=T_{(q, m;1)}$ and $|T_{(q, m; 0)}|=|T_{(q, m;1)}|=(n-1)/2$.
\item If $m\geq 2$ is even, 	$- T_{(q, m; 0)}=T_{(q, m;0)}$ and $- T_{(q, m; 1)}=T_{(q, m;1)}$. Furthermore, $|T_{(q, m; 0)}|=(n-3)/2$ and $|T_{(q, m; 1)}|=(n+1)/2$.
\end{enumerate}
\end{lemma}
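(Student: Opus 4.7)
The plan is to analyze the action of the negation map $i \mapsto -i \bmod n$ on the $q$-weight, and then to count digit tuples directly. For any $i$ with $1 \leq i \leq n-1$ and $q$-adic expansion $i = \sum_{j=0}^{m-1} i_j q^j$, one has $-i \bmod n = n - i = \sum_{j=0}^{m-1}(q-1-i_j)q^j$, because $n = q^m - 1 = \sum_{j=0}^{m-1}(q-1)q^j$ and the subtraction is borrow-free. Therefore
\[
\wt_q(-i \bmod n) = m(q-1) - \wt_q(i).
\]
Since $q = 2^s$ is even, $q-1$ is odd, so $m(q-1) \equiv m \pmod 2$ and $\wt_q(-i \bmod n) \equiv m + \wt_q(i) \pmod 2$. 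This congruence immediately yields the parity statements: when $m$ is odd, negation flips the parity of $\wt_q$ and swaps the two sets, giving $-T_{(q,m;0)} = T_{(q,m;1)}$; when $m$ is even, negation preserves parity, so each of $T_{(q,m;0)}$ and $T_{(q,m;1)}$ is stable under negation.

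For the cardinalities, I would count $q$-adic digit tuples by a generating-function identity. Let $N_e$ and $N_o$ denote the numbers of tuples $(i_0, \ldots, i_{m-1}) \in \{0,1,\ldots,q-1\}^m$ with even and odd digit sum, respectively. Then
\[
N_e - N_o = \sum_{(i_0,\ldots,i_{m-1})} (-1)^{i_0 + \cdots + i_{m-1}} = \Bigl(\sum_{a=0}^{q-1}(-1)^a\Bigr)^m = 0,
\]
since $q = 2^s$ is even makes the inner sum vanish. Combined with $N_e + N_o = q^m$, this gives $N_e = N_o = q^m/2 = (n+1)/2$.

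Finally, I would pass from digit tuples in $\{0,\ldots,n\}$ to the index set $\{1,\ldots,n-1\}$ by deleting the two boundary values: $i = 0$, which has $q$-weight $0$ (even), and $i = n$, which has $q$-weight $m(q-1)$ (even when $m$ is even, odd when $m$ is odd). In the odd-$m$ case this yields $|T_{(q,m;0)}| = N_e - 1 = (n-1)/2$ and $|T_{(q,m;1)}| = N_o - 1 = (n-1)/2$; in the even-$m$ case it yields $|T_{(q,m;0)}| = N_e - 2 = (n-3)/2$ and $|T_{(q,m;1)}| = N_o = (n+1)/2$. There is no substantial obstacle here: the whole argument rests on the borrow-free digit complement in the identity $n - i = \sum(q-1-i_j)q^j$ together with the fact that $q-1$ is odd for $q = 2^s$; the only care needed is to handle the two boundary indices $0$ and $n$ correctly when restricting to $\{1,\ldots,n-1\}$.
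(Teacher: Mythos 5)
Your proof is correct and follows essentially the same route as the paper: the borrow-free digit-complement identity $\wt_q(n-i)=m(q-1)-\wt_q(i)$, together with $q-1$ odd, yields the negation statements, and a count of digit tuples by parity of digit sum yields the cardinalities. The only differences are cosmetic --- you evaluate the count via the sign sum $\bigl(\sum_{a=0}^{q-1}(-1)^a\bigr)^m=0$ instead of the paper's conditioning on the last digit, and you handle the boundary indices $0$ and $n$ explicitly in both parity cases, which is slightly cleaner than the paper's use of $|T_{(q,m;0)}|+|T_{(q,m;1)}|=n-1$.
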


\begin{proof}
It is clear that $\wt_q( n-i)=(q-1) m -\wt_q(i)$ for each $0\leq i\leq n-1$.	 Since $q$ is even, 
\begin{align*}
\wt_q( n-i)&\equiv  m -\wt_q(i) \pmod{2}	\\
&\equiv \begin{cases}
1-\wt_q(i) \pmod{2} &~{\rm if~}m~{\rm is ~odd},\\
\wt_q(i)\pmod{2}&~{\rm if~}m~{\rm is ~even}. 	
\end{cases}
\end{align*}
Therefore, the following hold. 
\begin{enumerate}
\item If $m$ is odd, $i\in T_{(q, m; 0)}$ if and only if $n-i\in T_{(q, m; 1)}$. Consequently, $-T_{(q, m; 0)}=T_{(q, m; 1)}$ and $$|T_{(q, m; 0)}|=|T_{(q, m; 1)}|=(n-1)/2.$$ 
\item If $m$ is even, $j\in T_{(q, m; i)}$ if and only if $n-j\in T_{(q, m; i)}$ for each $i \in \{0,1\}$. The first desired conclusion of Result 2 follows.	
\end{enumerate}

We now determine $|T_{(q, m; 0)}|$ and $|T_{(q, m; 1)}|$. Clearly, $|T_{(q, m; 0)}|+|T_{(q, m; 1)}|=n-1$. It is easily verified that 
\begin{align*}
|T_{(q, m; 1)}|&=|\{(i_0,i_1,\ldots, i_{m-1})\in \Z_{q}^m:~i_0+i_1+\cdots+i_{m-1}\equiv 1\pmod{2}\}|\\
&=|\{(i_0,i_1,\ldots, i_{m-2})\in \Z_{q}^{m-1}:~i_0+i_1+\cdots+i_{m-2}~{\rm is ~odd~and} ~0\leq i_{m-1}\leq q-1~{\rm is~even}\}|\\
&~~~+|\{(i_0,i_1,\ldots, i_{m-2})\in \Z_{q}^{m-1}:~i_0+i_1+\cdots+i_{m-2}~{\rm is ~even~and} ~0\leq i_{m-1}\leq q-1~{\rm is~odd}\}|\\
&=\frac{q}2\times |\{(i_0,i_1,\ldots, i_{m-2})\in \Z_{q}^{m-1}:~i_0+i_1+\cdots+i_{m-2}~{\rm is ~odd}\}|+\\
&~~~~\frac{q}2\times |\{(i_0,i_1,\ldots, i_{m-2})\in \Z_{q}^{m-1}:~i_0+i_1+\cdots+i_{m-2}~{\rm is ~even}\}|\\
&=\frac{q^m}2.
\end{align*}
Consequently, $|T_{(q, m, 0)}|=(n-3)/2$. This completes the proof.  
\end{proof}

For any $\C \subset \gf(q)^n$, the {\it extended code} of $\C$ is defined by
\begin{equation*}
	\overline{\C}=\left\{(\bc, c_\infty):~\bc\in \C~{\rm and~}c_{\infty}=- \sum_{i=0}^{n-1}c_i \right \}.
\end{equation*}
 A {\it complement} of $\C$ is a code $\C^c$ such that $\C+\C^c=\gf(q)^n$ and $\C\cap \C^c=\{\0\}$. By definition, $\C_{(q, m; 0)}^c=\widetilde{\C_{(q, m; 1)}}$ and $\C_{(q, m; 1)}^c=\widetilde{\C_{(q, m; 0)}}$. %By Lemma \ref{lem:1}, if $m\geq 3$ is odd, the codes $\C_{(q, m; 0)}$ and $\C_{(q, m; 1)}$ form a pair of odd-like duadic codes with length $n$. 
The more properties of the codes $\C_{(q, m; i)}$, $\widetilde{\C_{(q, m; i)}}$ and $\overline{\C_{(q, m; i)}}$  are documented in the following theorems.

\begin{theorem}\label{thm:2}
Let $m\geq 3$ be odd, $q=2^s$ and $n=q^m-1$. The following hold.
\begin{enumerate}
\item The codes $\C_{(q, m; 0)}$ and $\C_{(q, m; 1)}$ form a pair of odd-like duadic codes with length $n$ and dimension $(n+1)/2$.	
\item The extended code $\overline{\C_{(q, m; i)}}$ is a $q$-ary self-dual code of length $n+1$.
\item The code $\widetilde{\C_{(q, m; i)}}$ is a $q$-ary self-orthogonal code with length $n$ and dimension $(n-1)/2$.
\item The codes $\C_{(q, m, 0)}^\perp$ and $\widetilde{\C_{(q, m; 1)}}$ have the same parameters, and the codes $\C_{(q, m, 1)}^\perp$ and $\widetilde{\C_{(q, m; 0)}}$ have the same parameters, respectively. 
\end{enumerate}
\end{theorem}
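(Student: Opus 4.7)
\medskip

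\noindent \textbf{Proof proposal.} The whole theorem is driven by Lemma \ref{lem:1}.1, which tells us that when $m$ is odd the multiplier $v=-1$ exchanges $T_{(q,m;0)}$ and $T_{(q,m;1)}$ and that $|T_{(q,m;0)}|=|T_{(q,m;1)}|=(n-1)/2$. Combined with the three bullet-properties of $T_{(q,m;0)},T_{(q,m;1)}$ listed before the definition of the codes (they are disjoint, together with $\{0\}$ they exhaust $\Z_n$, and each is a union of $q$-cyclotomic cosets), this is exactly the definition of a splitting $(T_{(q,m;0)},T_{(q,m;1)},-1)$ of $\Z_n$. Hence $\C_{(q,m;0)}$ and $\C_{(q,m;1)}$ are a pair of odd-like duadic codes, and the dimension of each is $n-|T_{(q,m;i)}|=(n+1)/2$. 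This is the content of Part 1.

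Part 2 then follows immediately from \cite[Theorem 6.4.12]{HP2003}, which asserts that whenever the multiplier giving the splitting of a pair of odd-like duadic codes can be taken to be $-1$, the extended codes are self-dual; we only need to observe that the extended length is $n+1$. For Part 3 I would argue in two independent ways and pick whichever is cleaner. The dimension of $\widetilde{\C_{(q,m;i)}}$ is $n-|T_{(q,m;i)}\cup\{0\}|=n-(n-1)/2-1=(n-1)/2$, which settles the parameters. For self-orthogonality, recall that the defining set of $\C^\perp$ is $\Z_n\setminus(-T(\C))$. Applied to $\widetilde{\C_{(q,m;0)}}$ with defining set $T_{(q,m;0)}\cup\{0\}$, Lemma \ref{lem:1}.1 gives
\begin{equation*}
T(\widetilde{\C_{(q,m;0)}}^{\perp})=\Z_n\setminus\bigl(-T_{(q,m;0)}\cup\{0\}\bigr)=\Z_n\setminus\bigl(T_{(q,m;1)}\cup\{0\}\bigr)=T_{(q,m;0)},
\end{equation*}
so $T(\widetilde{\C_{(q,m;0)}})\supset T(\widetilde{\C_{(q,m;0)}}^{\perp})$, which is equivalent to the inclusion $\widetilde{\C_{(q,m;0)}}\subseteq \widetilde{\C_{(q,m;0)}}^{\perp}$. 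The case $i=1$ is symmetric.

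For Part 4, I will use the same defining-set computation. From
\begin{equation*}
T(\C_{(q,m;0)}^{\perp})=\Z_n\setminus(-T_{(q,m;0)})=\Z_n\setminus T_{(q,m;1)}=T_{(q,m;0)}\cup\{0\}
\end{equation*}
we read off $\C_{(q,m;0)}^{\perp}=\widetilde{\C_{(q,m;0)}}$, and similarly $\C_{(q,m;1)}^{\perp}=\widetilde{\C_{(q,m;1)}}$. It then remains to show that $\widetilde{\C_{(q,m;0)}}$ and $\widetilde{\C_{(q,m;1)}}$ have the same parameters. They already have the same length and (by Part 3) the same dimension, so only the minimum distances must be matched; but the multiplier map $\mu_{-1}\colon(c_0,c_1,\dots,c_{n-1})\mapsto(c_0,c_{n-1},\dots,c_1)$ is a monomial (hence distance-preserving) permutation of $\gf(q)^n$, and by Lemma \ref{lem:1}.1 it sends the defining set $T_{(q,m;0)}\cup\{0\}$ to $T_{(q,m;1)}\cup\{0\}$, i.e.\ it is a code equivalence between $\widetilde{\C_{(q,m;0)}}$ and $\widetilde{\C_{(q,m;1)}}$.

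The whole argument is essentially bookkeeping around Lemma \ref{lem:1}.1, so no single step is a real obstacle; the most error-prone point is getting the defining set of the dual right (the $-$ sign and the inclusion/exclusion of $0$), which I have therefore flagged explicitly in both Part 3 and Part 4.
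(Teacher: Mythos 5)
Your proposal is correct. Parts 1 and 2 follow the paper's own route exactly (the splitting by $-1$ from Lemma \ref{lem:1}.1, then \cite[Theorem 6.4.12]{HP2003}); for Parts 3 and 4 you substitute explicit defining-set computations for the paper's citations, which is a genuinely more self-contained variant worth comparing. For Part 3 the paper simply invokes \cite[Theorem 6.4.1]{HP2003} on even-like duadic codes, whereas you derive self-orthogonality directly from $T(\widetilde{\C_{(q,m;0)}}^{\perp})=T_{(q,m;0)}\subset T_{(q,m;0)}\cup\{0\}$; both are the same fact, but your version makes the sign/zero bookkeeping visible. For Part 4 the paper argues via the complement: $\C^{\perp}$ and $\C^{c}$ have the same parameters for any cyclic code, and $\C_{(q,m;0)}^{c}=\widetilde{\C_{(q,m;1)}}$ by inspection of defining sets. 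You instead prove the stronger identity $\C_{(q,m;0)}^{\perp}=\widetilde{\C_{(q,m;0)}}$ and then exhibit the monomial equivalence $\mu_{-1}$ between $\widetilde{\C_{(q,m;0)}}$ and $\widetilde{\C_{(q,m;1)}}$; this is exactly the equivalence hiding inside the paper's cited "$\C^{\perp}$ and $\C^{c}$ have the same parameters," so the two arguments coincide in substance, with yours buying an extra piece of information (the explicit identification of the dual) at the cost of a few more lines.

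One small point you should add to Part 2: \cite[Theorem 6.4.12]{HP2003} requires that $1+\gamma^{2}n=0$ have a solution $\gamma\in\gf(q)$ (the extension coordinate is $\gamma$-weighted), and the paper's proof explicitly checks this via $1+n=q^{m}=0$ in $\gf(q)$, so that $\gamma=1$ works and the ordinary extension $\overline{\C_{(q,m;i)}}$ is the one the cited theorem speaks about. The condition is automatic here, so this is not a gap that breaks anything, but the verification belongs in the proof.
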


\begin{proof}
By Result 1 of Lemma \ref{lem:1}, $-1$ gives the splitting of $\C_{(q, m; 0)}$ and $\C_{(q, m; 1)}$. The desired result 1 follows. Notice that $1+n=0$ in $\gf(q)$, the desired result 2 directly follows from \cite[Theorem 6.4.12]{HP2003}. The desired result 3 directly follows from \cite[Theorem 6.4.1]{HP2003}. For any cyclic code $\C$, by\cite[Theorem 4.49]{HP2003}, we get that $\C^\perp$ and $\C^c$ have the same parameters. Notice that $\C_{(q, m; 0)}^c=\widetilde{\C_{(q, m; 1)}}$ and $\C_{(q, m; 1)}^c=\widetilde{\C_{(q, m; 0)}}$, the desired result 4 follows. This completes the proof. 
\end{proof}

\begin{theorem}\label{thm:3}
Let $m\geq 2$ be even, $q=2^s$ and $n=q^m-1$. The following hold.
\begin{enumerate}
\item $\C_{(q, m; 0)}^\perp=\widetilde{\C_{(q, m; 1)}}$ and $\C_{(q, m; 1)}^\perp=\widetilde{\C_{(q, m; 0)}}$. 

\item The code $\C_{(q, m; 0)}$ is a $q$-ary cyclic LCD code with length $n$ and dimension $(n+3)/2$.
\item The code $\C_{(q, m; 1)}$ is a $q$-ary cyclic LCD code with length $n$ and dimension $(n-1)/2$.  
\end{enumerate}
\end{theorem}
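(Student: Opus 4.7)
The plan is to exploit the standard description of the defining set of a dual cyclic code, together with Lemma \ref{lem:1} (Result 2), which already supplies the key closure $-T_{(q,m;i)} = T_{(q,m;i)}$ in the even-$m$ case. Recall that if a $q$-ary cyclic code $\C$ of length $n$ has defining set $T$ with respect to $\beta$, then $\C^\perp$ has defining set $\Z_n \setminus (-T)$. For Result 1, I would apply this to $\C_{(q, m; 0)}$, whose defining set is $T_{(q, m; 0)}$ with $0 \notin T_{(q, m; 0)}$. By Lemma \ref{lem:1}, $-T_{(q, m; 0)} = T_{(q, m; 0)}$, and by the basic partition $\Z_n = \{0\} \cup T_{(q, m; 0)} \cup T_{(q, m; 1)}$ (with the three sets disjoint), we get
\[
T(\C_{(q, m; 0)}^\perp) = \Z_n \setminus T_{(q, m; 0)} = \{0\} \cup T_{(q, m; 1)} = T(\widetilde{\C_{(q, m; 1)}}),
\]
proving $\C_{(q, m; 0)}^\perp = \widetilde{\C_{(q, m; 1)}}$. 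The identity $\C_{(q, m; 1)}^\perp = \widetilde{\C_{(q, m; 0)}}$ is handled symmetrically using $-T_{(q, m; 1)} = T_{(q, m; 1)}$.

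For Results 2 and 3, the LCD property then comes essentially for free. The intersection of two cyclic codes with defining sets $T_1$ and $T_2$ is the cyclic code with defining set $T_1 \cup T_2$. Applying this to $\C_{(q, m; 0)}$ and $\C_{(q, m; 0)}^\perp = \widetilde{\C_{(q, m; 1)}}$, the union of their defining sets is $T_{(q, m; 0)} \cup T_{(q, m; 1)} \cup \{0\} = \Z_n$, whose associated cyclic code is $\{\0\}$. Hence $\C_{(q, m; 0)} \cap \C_{(q, m; 0)}^\perp = \{\0\}$, so $\C_{(q, m; 0)}$ is LCD; the same argument, with the roles interchanged, shows $\C_{(q, m; 1)}$ is LCD. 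The dimensions are then immediate: using $\dim \C = n - |T(\C)|$ and the cardinalities $|T_{(q, m; 0)}| = (n-3)/2$ and $|T_{(q, m; 1)}| = (n+1)/2$ from Lemma \ref{lem:1}, we obtain $\dim \C_{(q, m; 0)} = (n+3)/2$ and $\dim \C_{(q, m; 1)} = (n-1)/2$.

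There is no genuine obstacle here once Lemma \ref{lem:1} is in hand; the entire argument is a bookkeeping exercise on defining sets. The only point worth stating carefully is the description of $T(\C^\perp)$, which one can either quote from a standard reference such as \cite[Theorem 4.4.9]{HP2003} or derive in one line from $h^*(x) = x^{\deg h}h(1/x)$ being the generator of $\C^\perp$. With that in place, Results 1--3 fall out sequentially without any additional computation.
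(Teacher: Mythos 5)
Your proposal is correct and follows essentially the same route as the paper: both hinge on the dual defining-set formula $T(\C^\perp)=\Z_n\setminus(-T(\C))$ from \cite[Theorem 4.4.9]{HP2003} together with Result 2 of Lemma \ref{lem:1} and the partition $\Z_n=\{0\}\cup T_{(q,m;0)}\cup T_{(q,m;1)}$. The only cosmetic difference is that you certify the LCD property via $T(\C)\cup T(\C^\perp)=\Z_n$ while the paper notes $T(\C)\cap T(\C^\perp)=\emptyset$; given the dual defining-set formula these are equivalent, and the dimension count is identical.
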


\begin{proof}
According to \cite[Theorem 4.4.9]{HP2003}, $\C_{(q, m; i)}^\perp$ has the defining set $T_{(q, m; i)}^\perp:=\Z_n \backslash -T_{(q, m; i)}$ with respect to the $n$-th primitive root of unity $\beta$ for each $i\in \{0,1\}$. By Result 2 of Lemma \ref{lem:1}, $-T_{(q, m ;i)}=T_{(q, m; i)}$. Consequently, $T_{(q, m; 0)}^\perp=\{0\} \cup T_{(q, m; 1)}$ and $T_{(q, m; 1)}^\perp=\{0\} \cup T_{(q, m; 0)}$. The desired result 1 follows from the definitions of the codes $\C_{(q, m; i)}$ and $\widetilde{\C_{(q, m; i)}}$. Notice that $T_{(q, m; i)}\cap T_{(q, m; i)}^\perp=\emptyset$, we obtain that $\C_{(q, m; i)}$ is a LCD code. The dimension of the code $\C_{(q, m; i)}$ follows from Result 2 of Lemma \ref{lem:1}. This completes the proof. 
\end{proof}

\section{The minimum distance of the $2^s$-ary Tang-Ding codes}\label{sec:3}

In this section, we lower bound the minimum distances of the $2^s$-ary Tang-Ding codes and their related codes. The minimum distance of the binary Tang-Ding codes was studied in \cite{TD22}. From now on, we always suppose that $q=2^s\geq 4$. We need the following lemmas.

\begin{lemma}[The BCH bound]\cite[Corollary 9]{MS77}\label{lem:4}
Let $\C$ be a $q$-ary cyclic code with length $n$ and defining set $T(\C)$. If there are integer $h$, integer $b$, integer $a$ with $\gcd(a, n)=1$ and integer $\delta$ with $2\leq \delta \leq n$ such that $(b+ai)\bmod n \in T(\C)$ for each $h\leq i\leq h+\delta-2$, then $d(\C)\geq \delta$, where $d(\C)$ denotes the minimum distance of $\C$.
\end{lemma}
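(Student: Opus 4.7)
The plan is to argue by contradiction using the classical Vandermonde determinant trick. First I would suppose there is a nonzero codeword $\bc=(c_0,c_1,\ldots,c_{n-1})\in\C$ of Hamming weight $w$ with $w\leq\delta-1$, and denote its support by $\{j_1,j_2,\ldots,j_w\}\subseteq\{0,1,\ldots,n-1\}$. Identifying $\bc$ with $c(x)=\sum_{\ell=1}^{w}c_{j_\ell}x^{j_\ell}$, the hypothesis $(b+ai)\bmod n\in T(\C)$ for $h\leq i\leq h+\delta-2$ means that $\beta^{b+ai}$ is a root of every codeword polynomial, giving the $\delta-1$ equations
\begin{equation*}
\sum_{\ell=1}^{w} c_{j_\ell}\,\beta^{(b+ai)\,j_\ell}=0,\qquad h\leq i\leq h+\delta-2.
\end{equation*}

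Next I would keep only the first $w$ of these equations, say for $h\leq i\leq h+w-1$, and regard them as a square linear system in the unknowns $c_{j_1},\ldots,c_{j_w}$. A short manipulation factors the coefficient matrix as a product $D V$, where $D$ is the diagonal matrix whose $\ell$-th diagonal entry is $\beta^{(b+ah)\,j_\ell}$ and $V$ is the Vandermonde-type matrix with $(i,\ell)$-entry $(\beta^{a j_\ell})^{i}$ (for $0\leq i\leq w-1$). Each diagonal entry of $D$ is a power of a primitive $n$-th root of unity and hence nonzero, so the nonsingularity of the coefficient matrix reduces to evaluating $\det(V)=\prod_{1\leq\ell<\ell'\leq w}(\beta^{a j_{\ell'}}-\beta^{a j_\ell})$ up to sign.

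The only real obstacle is verifying that this Vandermonde determinant does not vanish, i.e., that $\beta^{a j_1},\ldots,\beta^{a j_w}$ are pairwise distinct. This is where the assumption $\gcd(a,n)=1$ is essential: since $\beta$ has multiplicative order $n$, the element $\beta^{a}$ also has order $n$, so an equality $\beta^{a j_\ell}=\beta^{a j_{\ell'}}$ forces $j_\ell\equiv j_{\ell'}\pmod n$, which is impossible for distinct members of $\{0,1,\ldots,n-1\}$. Consequently the coefficient matrix is nonsingular, forcing $c_{j_1}=\cdots=c_{j_w}=0$ and contradicting $\bc\neq\0$. Hence no nonzero codeword of weight at most $\delta-1$ exists, and $d(\C)\geq\delta$ as claimed.
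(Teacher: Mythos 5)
Your argument is correct and is essentially the standard Vandermonde-determinant proof of the (generalized) BCH bound; the paper itself gives no proof here, citing \cite[Corollary 9]{MS77}, and your reduction via $\gcd(a,n)=1$ to the distinctness of the elements $\beta^{aj_\ell}$ is exactly the right way to handle the multiplier $a$. The only nit is the order of the factorization: with $D=\mathrm{diag}(\beta^{(b+ah)j_\ell})$ scaling the columns, the coefficient matrix is $VD$ rather than $DV$, but this does not affect the conclusion $\det(V)\det(D)\neq 0$.
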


\begin{lemma}\cite[Lemma 3]{LDL} \label{lem:5}
	Let $q=2^s$ with a positive integer $s$. Let $m\geq 2$ and let $\ell$ be a positive integer such that $m/\gcd(\ell, m)$ is odd. Then $\gcd(q^m-1, q^\ell+1)=1$. 
\end{lemma}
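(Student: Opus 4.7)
The plan is a short proof by contradiction using the multiplicative order of $q$ modulo a hypothetical common prime divisor, after first reducing to a cleaner setting. Put $d=\gcd(\ell,m)$ and write $m=dm'$, $\ell=d\ell'$ with $\gcd(m',\ell')=1$; the hypothesis says precisely that $m'$ is odd. Since $q^m-1=(q^d)^{m'}-1$ and $q^\ell+1=(q^d)^{\ell'}+1$, setting $Q=q^d$ gives
\[
\gcd(q^m-1,q^\ell+1)=\gcd(Q^{m'}-1,Q^{\ell'}+1),
\]
so it suffices to prove the corresponding statement for $Q,m',\ell'$ in place of $q,m,\ell$. In other words, I may assume from the outset that $\gcd(m,\ell)=1$ and that $m$ is odd.

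Suppose for contradiction that some prime $p$ divides both $q^m-1$ and $q^\ell+1$. Since $q$ is a power of $2$, both quantities are odd, so $p$ must be odd. Let $t=\ord_p(q)$. From $q^m\equiv 1\pmod{p}$ I get $t\mid m$, and since $m$ is odd this forces $t$ to be odd. From $q^\ell\equiv -1\pmod{p}$ I get $q^{2\ell}\equiv 1\pmod{p}$ while $q^\ell\not\equiv 1\pmod{p}$ (because $-1\not\equiv 1\pmod{p}$ for odd $p$), hence $t\mid 2\ell$ but $t\nmid\ell$, which forces $t$ to be even. The two conclusions $t$ odd and $t$ even contradict each other, establishing $\gcd(q^m-1,q^\ell+1)=1$.

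There is no real obstacle: the only conceptual move is pulling out the common factor $d=\gcd(\ell,m)$ so as to land in a setting where $m$ itself is odd, after which the parity clash between $t\mid m$ on the one hand and $t\mid 2\ell$ with $t\nmid\ell$ on the other is immediate. An alternative that bypasses the reduction is to track $2$-adic valuations directly: the same divisibility conditions give $v_2(t)\leq v_2(m)$ and $v_2(t)=v_2(\ell)+1$, hence $v_2(m)>v_2(\ell)$, so $v_2(m/\gcd(\ell,m))\geq 1$, contradicting the oddness hypothesis. Either route stays entirely elementary and uses nothing beyond orders modulo a prime.
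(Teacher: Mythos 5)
Your proof is correct. Note that the paper itself offers no proof of this lemma: it is quoted verbatim from \cite[Lemma 3]{LDL}, so there is no in-paper argument to compare against. Your argument is a clean, self-contained justification. Both halves check out: the reduction $Q=q^{\gcd(\ell,m)}$ genuinely preserves the hypothesis (the new ``$m$'' is $m/\gcd(\ell,m)$, which is odd by assumption, and $Q$ is still a power of $2$), and the parity clash for $t=\ord_p(q)$ is sound --- $t\mid m$ with $m$ odd forces $t$ odd, while $t\mid 2\ell$, $t\nmid\ell$ forces $t$ even, using that $p$ is odd because $q^m-1$ is odd. The alternative $2$-adic-valuation remark also holds, with the one step worth spelling out being that $t\mid 2\ell$ together with $t\nmid\ell$ pins down $v_2(t)=v_2(\ell)+1$ exactly, not merely $v_2(t)\le v_2(\ell)+1$.

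For comparison, the proof in the cited source (and the standard one in the literature) avoids orders modulo a prime altogether: since $q^\ell+1$ divides $q^{2\ell}-1$, one has
\[
\gcd(q^m-1,\,q^\ell+1)\ \Big|\ \gcd(q^m-1,\,q^{2\ell}-1)=q^{\gcd(m,2\ell)}-1=q^{\gcd(m,\ell)}-1,
\]
where the last equality uses the oddness of $m/\gcd(\ell,m)$ to show $\gcd(m,2\ell)=\gcd(m,\ell)$; as $q^{\gcd(m,\ell)}-1$ divides $q^\ell-1$ and $\gcd(q^\ell-1,q^\ell+1)=\gcd(q^\ell-1,2)=1$ for even $q$, the claim follows. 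That route leans on the identity $\gcd(q^a-1,q^b-1)=q^{\gcd(a,b)}-1$ and generalizes readily to odd $q$ (where the gcd becomes $2$), whereas your local, prime-by-prime argument is more elementary and makes the role of the parity hypothesis completely transparent. Either is acceptable.
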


\begin{lemma}\label{lem:6}
	Let $m\geq 2$, $q\geq 3$ and $n=q^m-1$. Let $2\leq A\leq q-1$ and $0\leq h\leq m-1$. Then
	\begin{equation*}
	\wt_q(A q^h-1-i)=(q-1)h+A-1-\wt_q(i)	
	\end{equation*}
	for each $0\leq i\leq A q^{h}-1$.
\end{lemma}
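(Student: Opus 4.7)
The plan is to establish the identity digit-by-digit in base $q$, using the key observation that the $q$-adic digits of $Aq^h-1$ dominate those of any $i$ in the stated range, so that the subtraction $Aq^h-1-i$ never triggers a borrow.

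First, I would compute the $q$-adic expansion of $N:=Aq^h-1$ explicitly. Writing
\[
Aq^h-1=(A-1)q^h+(q^h-1)=(A-1)q^h+\sum_{j=0}^{h-1}(q-1)q^j,
\]
I read off the digits $N_j=q-1$ for $0\leq j\leq h-1$, $N_h=A-1$, and $N_j=0$ for $j>h$. In particular,
\[
\wt_q(N)=(q-1)h+(A-1),
\]
which is exactly the constant appearing on the right-hand side.

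Next, for any $0\leq i\leq Aq^h-1$ I would expand $i=\sum_j i_jq^j$ in base $q$ and verify the digit-wise inequality $i_j\leq N_j$. Since $A\leq q-1$, one has $Aq^h\leq (q-1)q^h<q^{h+1}$, so $i<q^{h+1}$ forces $i_j=0$ for all $j>h$. At position $h$, if $i_h\geq A$ then $i\geq Aq^h>N$, contradicting $i\leq N$; hence $i_h\leq A-1=N_h$. For $j<h$ the bound $i_j\leq q-1=N_j$ is trivial.

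With $i_j\leq N_j$ in every coordinate, the subtraction $N-i$ involves no carries in base $q$, and
\[
N-i=\sum_{j=0}^{h}(N_j-i_j)q^j
\]
is already the $q$-adic expansion of $N-i$. Summing the digits yields
\[
\wt_q(Aq^h-1-i)=\sum_{j=0}^{h}(N_j-i_j)=\wt_q(N)-\wt_q(i)=(q-1)h+(A-1)-\wt_q(i),
\]
which is the claimed identity. The only mildly delicate point is the position-$h$ bound $i_h\leq A-1$; once that is in hand, the rest is a routine base-$q$ bookkeeping argument and no real obstacle arises.
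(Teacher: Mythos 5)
Your proof is correct and rests on the same observation as the paper's: the base-$q$ digits of $Aq^h-1$ are $q-1$ in positions $0,\dots,h-1$ and $A-1$ in position $h$, and they dominate the digits of any $i\leq Aq^h-1$, so the subtraction proceeds with no borrows and the weights subtract. The paper packages the same idea as the split $i=Bq^h+j$ combined with the complement identity $\wt_q(q^h-1-j)=(q-1)h-\wt_q(j)$, but the content is identical.
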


\begin{proof}
Suppose $i=B q^h+j$, where $0\leq B \leq A-1$ and $0\leq j\leq q^h-1$. Then 
$$A q^h-1-i= (A-1-B)q^h+q^h-1-j. $$
It follows that
\begin{align*}	
\wt_{q}(A q^{h}-1-i)&= A-1-B+\wt_q(q^h-1-j) \\
		&=(q-1)h+A-1-B-\wt_q(j)\\
		&= (q-1)h+A-1-\wt_q(i).
\end{align*}	
This completes the proof.
\end{proof}

When $m\geq 3$ is odd, to lower bound the minimum distance of $\C_{(q, m; i)}$, the following lemma is useful. 

\begin{lemma}\label{lem:7}
Let $m\geq 3$ be odd, $q=2^s\geq 4$ and $n=q^m-1$. Let $b=2 q^{m-1}$ and $a=q^{(m-1)/2}+1$. Then $\gcd(a, n)=1 $  and 
	\begin{equation*}
	\{b+ai:~-(q-1)\leq i\leq q^{(m-1)/2}+q-2\}\subseteq T_{(q, m; 0)}. 
	\end{equation*}
\end{lemma}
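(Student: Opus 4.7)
The claim splits into two independent parts: first, the coprimality $\gcd(a,n)=1$, and second, verifying that every element of the arithmetic progression lies in $T_{(q,m;0)}$, i.e., has even $q$-weight after reduction modulo $n$.

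For the coprimality, the plan is to apply Lemma~\ref{lem:5} directly with $\ell=(m-1)/2$. Since $m$ is odd, every divisor of $m$ is odd, so $m/\gcd(\ell,m)$ is odd as a quotient of an odd number by one of its divisors; hence $\gcd(q^{(m-1)/2}+1,q^m-1)=1$. This is the easy part.

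The substantive work is the $q$-weight computation. Set $h=(m-1)/2$, so $2h=m-1$, and expand $b+ai = 2q^{m-1}+iq^{h}+i$. I would split the range into three subranges and, in each, write down the base-$q$ digits of $(b+ai)\bmod n$ explicitly. For $0\le i\le q^{h}-1$, the two copies of the digits of $i$ occupy the disjoint blocks of positions $[0,h-1]$ and $[h,2h-1]=[h,m-2]$, and the coefficient $2$ sits alone at position $m-1$; thus the $q$-weight is $2\wt_q(i)+2$. For $q^{h}\le i\le q^{h}+q-2$, writing $i=q^{h}+j$ with $0\le j\le q-2$ gives $b+ai = 3q^{m-1}+(j+1)q^{h}+j$, whose only nonzero digits are $3$ at position $m-1$, $j+1$ at position $h$, and $j$ at position $0$ (all genuine base-$q$ digits since $j+1\le q-1$), giving $q$-weight $2j+4$. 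For $-(q-1)\le i\le -1$, writing $i=-k$ with $1\le k\le q-1$, the target is $2q^{m-1}-kq^{h}-k$; using the identity $q^{h}-k=(q-k)+\sum_{j=1}^{h-1}(q-1)q^{j}$ twice (once for the bottom and once after decomposing the position-$h$ digit), I expect the digits to be $1$ at position $m-1$, $q-1$ at positions $h+1,\ldots,m-2$, $q-k-1$ at position $h$, $q-1$ at positions $1,\ldots,h-1$, and $q-k$ at position $0$. Summing gives $\wt_q(b+ai)=2(q-k)+(q-1)(m-3)$, which is even because $m-3$ is even. In every subrange the $q$-weight is even, which is what we need.

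Two side checks are needed so that the digit-level computation really describes $(b+ai)\bmod n$: that $b+ai\ge 0$ for the negative values of $i$ (easy: the leading $2q^{m-1}$ dominates $(q-1)(q^{h}+1)$ for $q\ge 4$, $m\ge 3$), and that $b+ai<n=q^{m}-1$ for the largest values of $i$ (the upper estimate is $3q^{m-1}+(q-1)q^{h}+q-2<q^m-1$ when $q\ge 4$, since $h\le m-2$). The coprimality of $a$ and $n$ is not strictly needed for the weight argument, but it is part of the stated conclusion and also guarantees that the $q^{h}+2q-2$ values $b+ai$ are pairwise distinct mod $n$, which will be important when this lemma is plugged into the BCH bound in Lemma~\ref{lem:4}.

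The main obstacle is the negative-$i$ case, because the reduction of $2q^{m-1}-kq^{h}-k$ to standard base-$q$ form requires propagating the borrow past the zero block between positions $0$ and $h$; everything else is essentially bookkeeping. Once that single expansion is established, the evenness falls out uniformly from $(q-1)(m-3)$ being even for odd $m\ge 3$.
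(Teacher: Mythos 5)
Your proposal is correct and follows essentially the same route as the paper: coprimality via Lemma~\ref{lem:5} with $\ell=(m-1)/2$, then the same three-case split of the range of $i$ with the same resulting weights ($2\wt_q(i)+2$, $2j+4$, and $(q-1)(m-1)+2-2k$, since your $2(q-k)+(q-1)(m-3)$ equals the paper's expression). The only cosmetic difference is that in the negative-$i$ case you propagate the borrow through the base-$q$ digits explicitly, whereas the paper packages that same computation as two applications of Lemma~\ref{lem:6}.
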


\begin{proof}
Clearly, $\gcd((m-1)/2, m)=1$. By Lemma \ref{lem:5}, $\gcd(a, n)=1$. The rest of the proof of this lemma is divided into the following three cases.

{\it Case 1}: $0\leq i \leq q^{(m-1)/2}-1$. In this case, $b+ai =2q^{m-1}+iq^{(m-1)/2}+i$. It follows that 
$$
\wt_q( b+ai)=2+2\wt_q(i)	 \equiv 0\pmod{2}.
$$

{\it Case 2}: $i=q^{(m-1)/2}+j$, where $0\leq j\leq q-2$. In this case, 
	$$
	b+ai =3q^{m-1}+(j+1)q^{(m-1)/2}+j.
$$
It follows that $ \wt_{q}( b+ ai)=4+2j \equiv 0\pmod{2}$.

{\it Case 3}: $i=-j$, where $1\leq j\leq q-1$. In this case,
$$b+ai=q^{(m-1)/2}(2q^{(m-1)/2}-1-j)+q^{(m-1)/2}-j.$$
It follows that 	
	\begin{align*}
	\wt_{q}(b+a i)&=\wt_{q}(2q^{(m-1)/2}-1-j)+\wt_{q}(q^{(m-1)/2}-j)\\
	&=(q-1)(\tfrac{m-1}2)+1-\wt_q(j)+(q-1)(\tfrac{m-1}2)-\wt_q(j-1)\\
	&=(q-1)(m-1)+1-\wt_q(j)-\wt_q(j-1)\\
	&=(q-1)(m-1)+2-2j\\
	&\equiv 0\pmod{2},
\end{align*}	
where the second equation follows from Lemma \ref{lem:5}. 

Collecting the conclusions in Cases 1, 2 and 3 yields $\wt_{q}(b+ai)\equiv 0\pmod{2}$ for each $-(q-1)\leq i\leq q^{(m-1)/2}+q-2$. This completes proof.
\end{proof}

By the BCH bound and Lemma \ref{lem:7}, the minimum distance of the pair of odd-like duadic codes $\C_{(q, m; 0)}$ and $\C_{(q, m; 1)}$ has the following lower bound.  

\begin{theorem}\label{thm:8}
Let $m\geq 3$ be odd, $q=2^s\geq 4$ and $n=q^m-1$. Then 
$$d(\C_{(q, m; 0)})=d(\C_{(q, m; 1)})\geq q^{(m-1)/2}+2q-1.$$	
\end{theorem}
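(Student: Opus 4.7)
The plan is to combine Lemma \ref{lem:7} with the BCH bound (Lemma \ref{lem:4}), and then leverage the duality $-T_{(q,m;0)} = T_{(q,m;1)}$ from Lemma \ref{lem:1} to transfer the distance bound between the two Tang-Ding codes.

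First, I would lower bound $d(\C_{(q,m;0)})$. Lemma \ref{lem:7} supplies integers $b=2q^{m-1}$ and $a=q^{(m-1)/2}+1$ with $\gcd(a,n)=1$ such that $b+ai \pmod{n}$ lies in $T_{(q,m;0)}$ for every $i$ in the range $-(q-1)\leq i\leq q^{(m-1)/2}+q-2$. Taking $h=-(q-1)$ and $h+\delta-2 = q^{(m-1)/2}+q-2$ in Lemma \ref{lem:4}, I compute
\[
\delta = \bigl(q^{(m-1)/2}+q-2\bigr) - \bigl(-(q-1)\bigr) + 2 = q^{(m-1)/2}+2q-1,
\]
so the BCH bound yields $d(\C_{(q,m;0)})\geq q^{(m-1)/2}+2q-1$.

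Next, I would establish that $d(\C_{(q,m;0)}) = d(\C_{(q,m;1)})$. By Result~1 of Lemma \ref{lem:1}, $-T_{(q,m;0)} = T_{(q,m;1)}$, so $\C_{(q,m;1)}$ is the cyclic code whose defining set is the multiplier-image of $T_{(q,m;0)}$ under the unit $-1 \in \Z_n$. It is standard that the coordinate permutation $(c_0,c_1,\ldots,c_{n-1}) \mapsto (c_0,c_{n-1},c_{n-2},\ldots,c_1)$ sends $\C_{(q,m;0)}$ bijectively to $\C_{(q,m;1)}$, because in the polynomial picture it corresponds to $c(x)\mapsto c(x^{-1}) \bmod (x^n-1)$, which takes a generator polynomial whose zero set is $\{\beta^i:i\in T_{(q,m;0)}\}$ to one whose zero set is $\{\beta^{-i}:i\in T_{(q,m;0)}\}=\{\beta^j:j\in T_{(q,m;1)}\}$. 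Since this is a weight-preserving coordinate permutation, the two codes share the same minimum distance, and the bound established above transfers to $\C_{(q,m;1)}$.

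The work is essentially all done in Lemma \ref{lem:7}; the only remaining task is the arithmetic of counting consecutive exponents in the BCH chain and invoking the permutation equivalence. The mildly delicate point is keeping track of signs and ranges when identifying the length of the BCH chain, but no further obstacles are expected.
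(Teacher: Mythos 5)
Your proposal is correct and follows essentially the same route as the paper: the lower bound on $d(\C_{(q,m;0)})$ comes from Lemma \ref{lem:7} together with the BCH bound (your count $\delta=q^{(m-1)/2}+2q-1$ is right), and the equality $d(\C_{(q,m;0)})=d(\C_{(q,m;1)})$ comes from the multiplier $-1$ interchanging the two defining sets. The paper phrases this last step by citing the duadic-pair structure from Theorem \ref{thm:2}, whereas you exhibit the weight-preserving permutation $c(x)\mapsto c(x^{-1})$ explicitly, but the underlying fact is the same.
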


\begin{proof}
By Theorem \ref{thm:2}, the codes $\C_{(q, m; 0)}$ and $\C_{(q, m; 1)}$ form a pair of odd-like duadic codes. Hence, $d(\C_{(q, m; 0)})=d(\C_{(q, m; 1)})$. The lower bound on the minimum distance of $\C_{(q, m; 0)}$ directly follows from Lemma \ref{lem:7} and the BCH bound. This completes the proof.  	
\end{proof}

%\subsection{The Case That m is Even}

To lower bound minimum distances of the codes $\C_{(q, m ;0)}$ and $\C_{(q, m ;1)}$ for $m\equiv 2\pmod{4}$ and $m\geq 2$, we need the following lemmas.

\begin{lemma}\label{lem:9}
Let $m\equiv 2\pmod{4}$ and $m\geq 6$. Let $q=2^s\geq 4$, $n=q^m-1$, $b=q^{m-2}$ and $a=q^{(m-2)/2}+1$. Then $\gcd(a,n)=1 $  and 
$$
\{b+ai:-(q-1)\leq i\leq q^{(m-2)/2}+q-2\}\subseteq T_{(q,m;1)}. 
$$
\end{lemma}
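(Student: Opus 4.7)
The plan is to closely follow the structure of the proof of Lemma \ref{lem:7}, using the same three-case split on $i$, but adapted to the parameters $b = q^{m-2}$, $a = q^{(m-2)/2} + 1$ and the parity class $m \equiv 2 \pmod{4}$. First I would verify $\gcd(a,n) = 1$ by writing $m = 4k + 2$, so that $(m-2)/2 = 2k$ and $\gcd(2k,4k+2) = 2\gcd(k,2k+1) = 2$; hence $m/\gcd((m-2)/2, m) = 2k+1$ is odd, and Lemma \ref{lem:5} with $\ell = (m-2)/2$ yields $\gcd(a,n)=1$.

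Next I would verify $\wt_q(b + ai) \equiv 1 \pmod 2$ for every $i$ in the prescribed range. In Case 1, $0 \leq i \leq q^{(m-2)/2} - 1$, the identity $b + ai = q^{m-2} + iq^{(m-2)/2} + i$ has its three $q$-adic pieces living in disjoint blocks of digit positions (namely $\{m-2\}$, $\{(m-2)/2, \ldots, m-3\}$ and $\{0, \ldots, (m-2)/2 - 1\}$), so $\wt_q(b + ai) = 1 + 2\wt_q(i)$, which is odd. In Case 2, $i = q^{(m-2)/2} + j$ with $0 \leq j \leq q - 2$, direct expansion gives $b + ai = 2 q^{m-2} + (j+1)q^{(m-2)/2} + j$ of $q$-weight $2 + (j+1) + j = 2j + 3$, again odd.

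The main work lies in Case 3, $i = -j$ with $1 \leq j \leq q - 1$: here $b + ai = q^{m-2} - j q^{(m-2)/2} - j$ is nonnegative because $m \geq 6$ forces $q^{m-2} > (q-1)(q^{(m-2)/2} + 1)$, so no reduction modulo $n$ is needed. I would decompose
\[ b + ai = q^{(m-2)/2}\bigl(q^{(m-2)/2} - 1 - j\bigr) + \bigl(q^{(m-2)/2} - j\bigr), \]
where the two summands occupy disjoint blocks of $(m-2)/2$ digits. Using that $q^{(m-2)/2} - 1$ has all $q$-adic digits equal to $q-1$ (an immediate consequence of the identity behind Lemma \ref{lem:6}), the weights of the two factors are $(q-1)(m-2)/2 - j$ and $(q-1)(m-2)/2 - (j-1)$, which sum to $(q-1)(m-2) - 2j + 1$. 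Since $m \equiv 2 \pmod 4$ makes $m - 2$ even, the product $(q-1)(m-2)$ is even and the total is odd, as required.

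The main obstacle is Case 3: setting up the decomposition cleanly, confirming that the two summands really have non-overlapping $q$-adic digit supports, and making sure the parity of $(q-1)(m-2)$ works out. Once that is in place, assembling all three cases and quoting Lemma \ref{lem:5} gives the result.
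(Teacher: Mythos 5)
Your proposal is correct and follows exactly the route the paper intends: the paper's own proof just verifies $\gcd(a,n)=1$ via Lemma \ref{lem:5} and then says the weight computation is ``similar to Lemma \ref{lem:7}'', and your three cases (with the decomposition $q^{(m-2)/2}(q^{(m-2)/2}-1-j)+(q^{(m-2)/2}-j)$ in Case 3) are precisely that adaptation, with all digit-block and parity checks coming out right. No gaps.
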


\begin{proof}
Since $m\equiv 2\pmod{4}$, $m/\gcd((m-2)/2,m)$ is odd. By Lemma \ref{lem:5}, $\gcd(a, n)=1$. Similar to Lemma \ref{lem:7}, we can prove that $\wt_{q}(b+ai)\equiv 1\pmod{2}$ for each $-(q-1)\leq i\leq q^{(m-2)/2}+(q-2)$. This completes proof.
\end{proof}

%Similar to Lemma \ref{lem:9}, we can get that the following lemma.

\begin{lemma}\label{lem:10}
Let $m\equiv 2\pmod{4}$ and $m\geq 10$. Let $q=2^s\geq 4$, $n=q^m-1$, $b=q^{m-1}+q^{m-2}$ and $a=q^{(m-2)/2}+1$. Then $\gcd(a,n)=1 $  and 
$$
\{b+ai:-(q-1)\leq i\leq q^{(m-2)/2}+q-2\}\subseteq T_{(q, m; 0)}. 
$$
\end{lemma}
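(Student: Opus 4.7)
The plan is to follow the same three-case template used in the proof of Lemma \ref{lem:7} (and sketched for Lemma \ref{lem:9}). First, I would verify $\gcd(a,n) = 1$ using Lemma \ref{lem:5}: since $m\equiv 2\pmod{4}$, writing $m = 4k+2$ gives $(m-2)/2 = 2k$ and $\gcd(2k,\, 4k+2) = 2$, hence $m/\gcd((m-2)/2,\, m) = 2k+1$ is odd. Next, I would partition the range of $i$ into the three sub-cases $0\leq i\leq q^{(m-2)/2} - 1$, $i = q^{(m-2)/2} + j$ with $0\leq j\leq q-2$, and $i = -j$ with $1\leq j\leq q-1$, and show $\wt_q(b + ai)\equiv 0 \pmod{2}$ in each.

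In the first two cases the computation is direct: expand $b + ai$ as a sum of powers of $q$ and observe that the summands occupy disjoint $q$-adic positions (using $q^{(m-2)/2} + 1 \leq q^{m-2}$ and $j+1\leq q-1$), so the $q$-weight is just the sum of the nonzero digits, which evaluates to $2 + 2\wt_q(i)$ and $4 + 2j$ respectively, both even. The main obstacle is the third case, which requires computing the $q$-adic expansion of
\begin{equation*}
b - aj \;=\; q^{m-1} + q^{m-2} - j\, q^{(m-2)/2} - j
\end{equation*}
by propagating borrows through all positions. The hypothesis $m \geq 10$ ensures $(m-2)/2 \geq 4$, so the two blocks of digits equal to $q-1$ produced by the borrows starting at position $0$ and position $(m-2)/2$ remain well-separated, and the borrow chain dies out by position $m-2$ without perturbing the leading digit.

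Carrying out the borrow propagation carefully (alternatively, applying Lemma \ref{lem:6} to the factorization $q^{(m-2)/2}\bigl(q^{(m-2)/2} - j\bigr) + q^{m-1} - j$), I expect the resulting digit pattern to be $q - j$ at position $0$, $q - 1$ at positions $1,\ldots,(m-2)/2 - 1$, $q - j - 1$ at position $(m-2)/2$, $q - 1$ at positions $(m-2)/2 + 1,\ldots, m - 3$, $0$ at position $m - 2$, and $1$ at position $m - 1$. Summing these digits gives
\begin{equation*}
\wt_q(b - aj) \;=\; 2q - 2j + (q - 1)(m - 4),
\end{equation*}
which is even since $m - 4$ is even (because $m \equiv 2 \pmod{4}$) and $2q - 2j$ is manifestly even. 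Combining the three cases establishes $b + ai \in T_{(q,\,m;\,0)}$ for every $i$ in the prescribed range, completing the proof.
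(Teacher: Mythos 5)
Your proof is correct, but it takes a more laborious route than the paper does for this particular lemma. The paper's own proof of Lemma \ref{lem:10} is a one-line reduction: since $0\leq q^{m-2}+(q^{(m-2)/2}+1)i<q^{m-1}$ for every $i$ in the stated range, adding $q^{m-1}$ simply installs a $1$ in the (previously empty) top digit, so $\wt_q(q^{m-1}+q^{m-2}+ai)=1+\wt_q(q^{m-2}+ai)$, and the parity flips from the odd value guaranteed by Lemma \ref{lem:9} to the even value required for membership in $T_{(q,m;0)}$. You instead redo the full three-case digit analysis (the template of Lemma \ref{lem:7}) directly on $b=q^{m-1}+q^{m-2}$; I checked your digit patterns and weight counts ($2+2\wt_q(i)$, $4+2j$, and $2q-2j+(q-1)(m-4)$ in the three cases) and they are all right, so the argument stands on its own. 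What your version buys is self-containedness — it does not lean on Lemma \ref{lem:9}, whose proof the paper itself only sketches as ``similar to Lemma \ref{lem:7}'' — at the cost of repeating essentially the same borrow-propagation computation. Two minor remarks: the hypothesis $m\geq 10$ is not actually what keeps your two blocks of $(q-1)$-digits separated (any even $m\geq 4$ suffices for that; the restriction to $m\geq 10$ exists only because the paper handles $m=6$ by the sharper Lemma \ref{lem:11}); and your parenthetical appeal to Lemma \ref{lem:6} is stated loosely — the clean decomposition in Case 3 is $b-aj=q^{m-1}+q^{(m-2)/2}\bigl(q^{(m-2)/2}-1-j\bigr)+\bigl(q^{(m-2)/2}-j\bigr)$, whose three summands occupy disjoint digit positions — but the digit pattern you then write down is the correct one.
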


\begin{proof}
It is easily verified that 
$$
\wt_q(q^{m-1}+q^{m-2}+(q^{(m-2)/2}+1)i  )=1+\wt_q(q^{m-2}+(q^{(m-2)/2}+1)i  )
$$
for any $-(q-1)\leq i\leq q^{(m-2)/2}+q-2$. The desired result directly follows from Lemma \ref{lem:9}. This completes the proof. 
\end{proof}

\begin{lemma}\label{lem:11}
Let $m=6$, $q=2^s\geq 4$, $n=q^m-1$, $b=q^{m-1}+q$ and $a=\frac{q^{m}-1}{q-1}-q^{(m+2)/2}-1$. Then $\gcd(a,n)=1 $  and 
$$
\{(b+ai) \bmod n:~-(q^{2}-q)\leq i\leq q^{2}-q \}\subseteq T_{(q,m; 0)}. 
$$
\end{lemma}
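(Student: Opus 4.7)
The plan is to establish the two assertions of Lemma \ref{lem:11} in turn. For $\gcd(a, n) = 1$, let $d = \gcd(a, n)$ and work modulo $d$, where $q^6 \equiv 1$ and $a \equiv q + q^2 + q^3 + q^5 \equiv 0$. Multiplying the latter by $q$ (which gives $1 + q^2 + q^3 + q^4 \equiv 0$) and subtracting yields $(q - 1)(q^4 + 1) \equiv 0 \pmod d$. From $(q^4 + 1)(q^4 - 1) = q^8 - 1$ and $\gcd(q^8 - 1, q^6 - 1) = q^{\gcd(8,6)} - 1 = q^2 - 1$, the gcd $\gcd(q^4 + 1, q^6 - 1)$ divides $q^2 - 1$; since $q^4 + 1 \equiv 2 \pmod{q^2 - 1}$ and $q^2 - 1$ is odd (as $q = 2^s$), this forces $\gcd(q^4 + 1, q^6 - 1) = 1$, hence $d \mid q - 1$. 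Finally, $a = q(q^4 + q^2 + q + 1) \equiv 4 \pmod{q - 1}$, and since $q - 1$ is odd we conclude $d \mid \gcd(4, q - 1) = 1$.

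For the set containment, since $m = 6$ is even, Lemma \ref{lem:1} gives $-T_{(q, m; 0)} = T_{(q, m; 0)}$, so it suffices to prove $(b + ai) \bmod n \in T_{(q, m; 0)}$ for $0 \leq i \leq q^2 - q$ and $(aj - b) \bmod n \in T_{(q, m; 0)}$ for $1 \leq j \leq q^2 - q$; the negative values $i = -j$ then follow from $(b - aj) \bmod n = n - ((aj - b) \bmod n)$ together with the above symmetry. For the first range, writing $i = i_1 q + i_0$ with $0 \leq i_0, i_1 \leq q - 1$ and applying $q^6 \equiv 1 \pmod n$ yields the key identity
\[
b + ai \equiv i_1 + (i_0 + 1) q + (i_1 + i_0) q^2 + (i_1 + i_0) q^3 + i_1 q^4 + (i_0 + 1) q^5 \pmod n,
\]
which I would process in three subcases: (a) all pseudo-digits already lie in $[0, q-1]$, giving $q$-weight $4(i_1 + i_0) + 2$; (b) $i_0 + i_1 \geq q$ with $i_0, i_1 \leq q - 2$, producing a double carry at positions $2$ and $3$; (c) $i_0 = q - 1$ with $i_1 \leq q - 2$, in which the carry from position $1$ cascades through positions $2, 3, 4$ while a separate carry at position $5$ wraps via $q^6 \equiv 1$ back to position $0$. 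The boundary $i = q^2 - q$ falls into (a). Each carry changes the digit sum by $1 - q$ (odd for $q = 2^s$), and in every subcase the parity of the starting pseudo-digit sum and the parity of the number of carries combine to yield an even $q$-weight.

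For the second range $1 \leq j \leq q^2 - q$, writing $j = j_1 q + j_0$ gives analogously
\[
aj - b \equiv j_1 + (j_0 - 1) q + (j_1 + j_0) q^2 + (j_1 + j_0) q^3 + j_1 q^4 + (j_0 - 1) q^5 \pmod n.
\]
The new wrinkle is $j_0 = 0$, where the negative pseudo-digit at positions $1$ and $5$ forces one addition of $n$ and a borrow at position $1$ that is absorbed by the $q^5$ contribution coming from $q^6 \equiv 1$; a parallel three-subcase analysis again delivers even $q$-weight. The main obstacle is the bookkeeping: the range $|i| \leq q^2 - q$ is substantially wider than those in Lemmas \ref{lem:9} and \ref{lem:10}, so several distinct overflow, borrow, and wrap-around patterns must be enumerated, and the constraint $j \leq q^2 - q$ (which forces $j_1 \leq q - 2$ whenever $j_0 \geq 1$) is essential to prevent carry chains from propagating uncontrollably past position $5$. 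Once the patterns are catalogued, the arithmetic in each is routine and the lemma follows.
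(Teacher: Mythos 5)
Your argument is correct and follows the same overall strategy as the paper — verify the BCH interval by computing the $q$-adic digits of $(b+ai)\bmod n$ and checking that the digit sum is even — but the internal organization differs in two ways. First, for the coprimality you derive $d\mid(q-1)(q^4+1)$ from $a(q-1)\equiv 0\pmod d$ and then kill each factor separately; the paper instead reduces $a$ modulo $q-1$ and modulo $n/(q-1)$ and invokes Lemma \ref{lem:5} for $\gcd(q^4+1,q^6-1)=1$. Both are fine. Second, and more substantively, you write $i=i_1q+i_0$ (standard base-$q$ digits), which exploits that $aq\bmod n$ is a digit rotation of $a$ and yields the symmetric pseudo-digit vector $(i_1,\,i_0{+}1,\,i_0{+}i_1,\,i_0{+}i_1,\,i_1,\,i_0{+}1)$, whereas the paper writes $i=(q-1)i_1+i_0$, exploiting the form of $a(q-1)\bmod n$, and gets the vector in Equation (\ref{EQ::1}); it then normalizes digits case by case (sometimes after multiplying by $q$). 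Your carry/borrow-parity bookkeeping — starting pseudo-digit sum $4(i_0+i_1)\pm 2$ is even, each carry or borrow shifts the sum by the odd quantity $1-q$ or $q-1$, and the carries always occur in pairs (positions $2,3$ together, or positions $1$ and $5$ together with the wrap through $q^6\equiv 1$) — is a cleaner invariant than the paper's explicit per-subcase digit formulas, and I checked that your three subcases in each range are exhaustive under the constraint $|i|\le q^2-q$ (which indeed forces $i_1\le q-2$ exactly when needed to stop the cascades). The only points you should make explicit in a full write-up are (i) that $(aj-b)\bmod n\neq 0$ on the second range (immediate, since the computed digit sum is always positive), so that $(b-aj)\bmod n=n-((aj-b)\bmod n)$ and the symmetry $-T_{(q,6;0)}=T_{(q,6;0)}$ of Lemma \ref{lem:1} applies, and (ii) the verification that no carry propagates out of position $4$ (i.e., $i_1+1\le q-1$); both are routine.
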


\begin{proof}
It is clear that $a\equiv m-2\equiv 4 \pmod{q-1}$, then $\gcd(a, q-1 )=1$. Consequently,
\begin{align*}
\gcd(a, n)&=\gcd(a, n/(q-1))\\
&=\gcd(q^{(m+2)/2}+1, n/(q-1))\\
&=\gcd(q^{(m+2)/2}+1, q^m-1)\\
&=1.	
\end{align*}
Below we will prove that $\wt_q((b+ai)\bmod{n})\equiv 0\pmod{2}$ for $-(q^2-q)\leq i\leq q^2-q$ in the following two cases 

{\it Case 1}: $0\leq i\leq q^2-q$. Suppose $i=(q-1)i_1+i_0$, where $0\leq i_0, i_1\leq q-1$, then
\begin{equation}\label{EQ::1}
b+ai\equiv (i_0-i_1+1)q^5+i_1q^4+i_0q^3+i_0q^2+(i_0-i_1+1)q+i_1 \pmod{n}.	
\end{equation}
The rest of the proof is divided into the following subcases.
\begin{itemize}
\item If $i_0-i_1+1= q$, i.e., $i_0=q-1$ and $i_1=0$, then $(b+ai)\bmod n=q^4+1$. Consequently, 
$$\wt_q((b+a i ) \bmod n )=2.$$
\item If $0\leq i_0-i_1+1\leq q-1$, by Equation (\ref{EQ::1}),
$$(b+ai) \bmod n= (i_0-i_1+1)q^5+i_1q^4+i_0q^3+i_0q^2+(i_0-i_1+1)q+i_1.$$ 
It follows that $\wt_q((b+a i ) \bmod n )=4i_0+2$. 	
\item If $i_0=0$ and $i_1>1$, by Equation (\ref{EQ::1}), we get that 
$$
(b+ai)q \bmod n= (i_1-1)q^5+(q-1)q^4+(q-1)q^3+(q+1-i_1)q^2+(i_1-1)q+q+1-i_1.
$$
It follows that 
\begin{align*}
\wt_q((b+ai) \bmod n)&=\wt_q((b+ai)q \bmod n)\\
&=4q-2.
\end{align*}
\item If $i_0>0$ and $i_1>i_0+1$, by Equation (\ref{EQ::1}), we get that 
\begin{align*}
(b+ai)q \bmod n= i_1q^5+i_0q^4+(i_0-1)q^3+(q+i_0-i_1+1)q^2+(i_1-1)q+q+i_0-i_1+1.	
\end{align*}
It follows that 
\begin{align*}
\wt_q((b+ai) \bmod n)&=\wt_q((b+ai)q \bmod n)\\
&=2q+4i_0.
\end{align*}
\end{itemize}

{\it Case 2}: $-1\leq i\leq -(q^2-q)$. Suppose $-i=(q-1)i_1+i_0$, where $0\leq i_0,i_1\leq q-1$, then
\begin{equation}\label{EQ::2}
-(b+ai)\equiv (i_0-i_1-1)q^5+i_1q^4+i_0q^3+i_0q^2+(i_0-i_1-1)q+i_1 \pmod{n}.		
\end{equation}
The rest of the proof is divided into the following subcases.
\begin{itemize}
\item If $i_0-i_1-1\geq 0$, by Equation (\ref{EQ::2}), we get that  
\begin{align*}
\wt_q( (b+ai)\bmod n )&=(q-1)m-\wt_q( -(b+ai)\bmod n )\\
&=6(q-1)-4i_0+2.
\end{align*}

\item If $i_0-i_1-1<0$, 	it follows from $i\neq 0$ that $1\leq i_1\leq q-1$ and $0\leq i_0\leq i_1\leq q-1$. By Equation (\ref{EQ::2}), we get that 
$$-(b+ai)q \equiv i_1q^5+i_0q^4+i_0q^3+(i_0-i_1-1)q^2+i_1q+(i_0-i_1-1) \pmod{n}.$$
It follows that 
\begin{align*}
\wt_q( -(b+ai) \bmod{n})&=\wt_q( -(b+ai)q \bmod{n})\\
&=\begin{cases}
4q-6~&{\rm if}~i_0=0,\\
2q-4+4i_0~&{\rm if}~i_0>0.	
\end{cases}
\end{align*}
Consequently,
 \begin{align*}
\wt_q( (b+ai)\bmod n )&=(q-1)m-\wt_q( -(b+ai)\bmod n )\\
&=\begin{cases}
2q~&{\rm if}~i_0=0,\\
4q-2-4i_0~&{\rm if}~i_0>0.	
\end{cases}
\end{align*}
\end{itemize}

Collecting the conclusions in Cases 1 and 2 yields $\wt_{q}((b+ai)\bmod n )\equiv 0\pmod{2}$ for each $-(q^{2}-q)\leq i\leq q^{2}-q$. This completes proof.
\end{proof}

By the BCH bound and Lemmas \ref{lem:9}, \ref{lem:10} and \ref{lem:11}, we obtain the following results.

\begin{theorem}\label{thm:12}
Let $m\equiv 2\pmod{4}$ and $m\geq 2$. Let $q=2^s\geq 4$ and $n=q^m-1$. Then 
\begin{align*}
d(\C_{(q, m; 0)})\geq \begin{cases}
 (q+2)/2~&{\rm if}~m=2,\\
 2q^2-2q+2~&{\rm if}~m=6,\\
 q^{(m-2)/2}+2q-1~&{\rm if}~m\geq 10.
 \end{cases}	
\end{align*}
and 
\begin{align*}
d(\C_{(q, m; 1)})\geq \begin{cases}
 (q+2)/2~&{\rm if}~m=2,\\	
 q^{(m-2)/2}+2q-1~&{\rm if}~m\geq 6.	
 \end{cases}	
\end{align*} 	
\end{theorem}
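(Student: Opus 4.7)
The plan is to read off the lower bounds by applying the BCH bound (Lemma \ref{lem:4}) to the arithmetic progressions inside the defining sets already furnished by Lemmas \ref{lem:9}, \ref{lem:10} and \ref{lem:11}, and to dispose of the exceptional case $m = 2$ by an elementary direct construction.

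For $m \geq 6$, Lemma \ref{lem:9} gives an AP of length $q^{(m-2)/2} + 2q - 2$ inside $T_{(q, m; 1)}$ with common difference coprime to $n$, so the BCH bound yields $d(\C_{(q, m; 1)}) \geq q^{(m-2)/2} + 2q - 1$ at once. For $m \geq 10$, Lemma \ref{lem:10} supplies the matching AP in $T_{(q, m; 0)}$, giving the same lower bound for $\C_{(q, m; 0)}$; for $m = 6$, Lemma \ref{lem:11} instead produces an AP of length $2q^2 - 2q + 1$ in $T_{(q, 6; 0)}$, whence $d(\C_{(q, 6; 0)}) \geq 2q^2 - 2q + 2$. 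In each case the coprimality condition $\gcd(a, n) = 1$ required by Lemma \ref{lem:4} is explicitly verified within the cited lemma.

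For $m = 2$ the preparatory lemmas do not apply, so I would construct the APs directly. Since $q$ is a power of $2$, $\gcd(2, q^2 - 1) = 1$, so sequences of common difference $2$ satisfy the coprimality hypothesis of the BCH bound. The set $\{1, 3, \ldots, q - 1\}$ lies in $T_{(q, 2; 1)}$ because each element equals its own $q$-adic expansion and therefore has odd $q$-weight, while the set $\{q + 1, q + 3, \ldots, 2q - 1\}$ lies in $T_{(q, 2; 0)}$ since its $q$-weights run through $2, 4, \ldots, q$, all even. Each AP contains $q/2$ terms, so the BCH bound yields $d(\C_{(q, 2; i)}) \geq q/2 + 1 = (q + 2)/2$ for both $i \in \{0, 1\}$.

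The bulk of the combinatorial work has already been absorbed into Lemmas \ref{lem:9}--\ref{lem:11}, so no serious obstacle remains at the level of Theorem \ref{thm:12} itself; the only item requiring its own attention is the small case $m = 2$, which succumbs to the one-line construction above in each parity.
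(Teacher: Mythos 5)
Your proposal is correct and follows essentially the same route as the paper: invoke the BCH bound on the arithmetic progressions supplied by Lemmas \ref{lem:9}, \ref{lem:10} and \ref{lem:11} for $m\geq 6$, and handle $m=2$ by exhibiting a short progression of common difference $2$ in each defining set (the paper uses $\{q,q+2,\ldots,2q-2\}$ and $\{2q,2q+2,\ldots,3q-2\}$ rather than your $\{1,3,\ldots,q-1\}$ and $\{q+1,\ldots,2q-1\}$, but this is an immaterial variation). The term counts and resulting bounds all check out.
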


\begin{proof}
When $m=2$ and $0\leq i\leq q/2-1$, we have $\wt_q(2q+2i)=2i+2\equiv 0\pmod{2}$ and  
 $$\wt_q(q+2i)=2i+1\equiv 1\pmod{2}.$$ 
By the BCH bound, $d(\C_{(q, m; 0)})\geq (q+2)/2$ and $d(\C_{(q, m; 1)})\geq (q+2)/2$. When $m\geq 6$, the desired lower bounds on the minimum distances of $\C_{(q, m; 0)}$ and $\C_{(q, m; 1)}$ follow from the BCH bound and Lemmas \ref{lem:9}, \ref{lem:10} and \ref{lem:11}. This completes the proof. 
\end{proof}

To lower bound minimum distances of the codes $\C_{(q, m ;0)}$ and $\C_{(q, m ;1)}$ for $m\equiv 0\pmod{4}$ and $m\geq 4$, we need the following lemmas.

\begin{lemma}\label{lem:13}
	Let $m \equiv 0\pmod{4}$ and $m\geq 4$. Let $q=2^s\geq 4$, $n=q^m-1$, $b=q^{m-1}$ and $a=(\frac{q-2}{2})(\frac{q^{m}-1}{q-1})+\frac{q^{(m-2)/2}-1}{q-1}$. Then $\gcd(a,n)=1 $  and 
	$$
	\{ (b+ai) \bmod n:~1\leq i\leq q^{(m-2)/2} \} \subseteq T_{(q,m; 0)}. 
	$$
\end{lemma}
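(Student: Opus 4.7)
The plan is to follow the strategy of Lemmas~\ref{lem:7} and~\ref{lem:11}: first establish $\gcd(a,n)=1$, then verify that $(b+ai)\bmod n$ has even base-$q$ weight by writing it down explicitly. Write $N=(q^m-1)/(q-1)=1+q+\cdots+q^{m-1}$ and $L=(q^{(m-2)/2}-1)/(q-1)=1+q+\cdots+q^{(m-4)/2}$, so that $a=\tfrac{q-2}{2}N+L$. For the gcd, factor $n=(q-1)N$. Modulo $q-1$ one computes $a\equiv \tfrac{q-2}{2}m+\tfrac{m-2}{2}=\tfrac{(q-1)m-2}{2}\equiv -1\pmod{q-1}$, giving $\gcd(a,q-1)=1$. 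Modulo $N$, $a\equiv L\pmod N$, and since $m\equiv 0\pmod 4$ makes $(m-2)/2$ odd and hence coprime to $m$, the standard identity $\gcd(q^A-1,q^B-1)=q^{\gcd(A,B)}-1$ yields $\gcd(q^{(m-2)/2}-1,q^m-1)=q-1$, which gives $\gcd(L,N)=1$. Since every prime divisor of $n$ divides $q-1$ or $N$, $\gcd(a,n)=1$ follows.

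For the inclusion, the identity $(q-1)a=\tfrac{q-2}{2}n+(q^{(m-2)/2}-1)$ gives $ai\bmod n=iL+CN$ with $C=\tfrac{i(q-2)}{2}\bmod(q-1)\in\{0,\ldots,q-2\}$. Elementary estimates ($iL<q^{m-2}$, $CN\le(q-2)N$, and $q^{m-1}+q^{m-2}+(q-2)N<(q-1)N$) show that $(b+ai)\bmod n=q^{m-1}+iL+CN$ exactly, with no further reduction. Writing $i=\sum_j i_jq^j$, the pre-carry digit at position $l$ of $T:=q^{m-1}+iL+CN$ is $\mathrm{pre}_l=C+\sum_j i_j\,[\,0\le l-j\le(m-4)/2\,]+\delta_{l,m-1}$, and $\wt_q(T)=\sum_l\mathrm{pre}_l-(q-1)R$, where $R$ is the total carry-in weight during base-$q$ normalization. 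Because $\sum_l\sum_j i_j\,[\,0\le l-j\le(m-4)/2\,]=\tfrac{m-2}{2}\wt_q(i)$, this yields
\begin{equation*}
\wt_q(T)\equiv \tfrac{m-2}{2}\wt_q(i)+Cm+1+R\equiv \wt_q(i)+R+1\pmod 2,
\end{equation*}
where the last congruence uses $(m-2)/2$ odd and $m$ even (both consequences of $m\equiv 0\pmod 4$). Thus it suffices to prove $\wt_q(i)+R\equiv 1\pmod 2$ for all $1\le i\le q^{(m-2)/2}$.

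This parity identity is the crux and requires case analysis by the digit structure of $i$. For $1\le i\le q-1$, one splits on whether $i+C\le q-1$ (no carry, $R=0$) or $i+C\ge q$ (one carry chain of length $(m-2)/2$, $R=(m-2)/2$); each case gives an explicit digit expression whose parity follows from $m\equiv 0\pmod 4$. For $i=q^{(m-2)/2}$, all pre-carry digits are at most $q-1$, so $R=0$ and $\wt_q(T)=(q-1)m/2$, which is even. The multi-digit range $q\le i\le q^{(m-2)/2}-1$ is the main obstacle: here $(iL)^{\mathrm{pre}}_l=\sum_{j=\max(0,l-(m-4)/2)}^{\min(l,(m-2)/2)}i_j$ can exceed $q-1$, so multi-level carries propagate in a pattern that depends on the digits $i_j$. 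Handling this uniformly likely requires subdividing into subcases analogous to those in Lemma~\ref{lem:11}, and checking $\wt_q(i)+R\equiv 1\pmod 2$ in each subcase.
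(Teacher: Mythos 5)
There is a genuine gap: your argument is complete only for the first (gcd) half of the lemma and for the easy ranges $1\le i\le q-1$ and $i=q^{(m-2)/2}$. For the bulk of the range, $q\le i\le q^{(m-2)/2}-1$, you reduce the claim to the parity identity $\wt_q(i)+R\equiv 1\pmod 2$ (with $R$ the total carry weight in normalizing $q^{m-1}+iL+CN$) and then explicitly leave it unproven, remarking that it ``likely requires subdividing into subcases.'' That is precisely the hard part of the lemma, and nothing in your write-up controls how the carries of $iL$ propagate when $i$ has several nonzero $q$-adic digits; the pre-carry digits $\sum_j i_j$ can exceed $q-1$ in many positions simultaneously, and $R$ depends on the digit pattern of $i$ in a way your framework does not resolve. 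So the proposal, as it stands, does not prove the inclusion.

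The missing idea is to abandon the $q$-adic digit decomposition of $i$ and instead write $i=(q-1)i_1+i_0$ with $0\le i_0\le q-2$ and $0\le i_1\le\frac{q^{(m-2)/2}-1}{q-1}$. Since $(q-1)a\equiv q^{(m-2)/2}-1\pmod n$, this gives
\begin{equation*}
b+ai\equiv q^{m-1}+(q^{(m-2)/2}-1)i_1+\tfrac{q-2}{2}\tfrac{q^m-1}{q-1}\,i_0+\tfrac{q^{(m-2)/2}-1}{q-1}\,i_0\pmod n,
\end{equation*}
and the term $(q^{(m-2)/2}-1)i_1=(i_1-1)q^{(m-2)/2}+(q^{(m-2)/2}-i_1)$ has \emph{constant} $q$-weight $(q-1)\frac{m-2}{2}$ for all $i_1$ in this range (by Lemma~\ref{lem:6}). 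This collapses your intractable carry bookkeeping into three clean cases ($i_0=0$; $i_0\ge 2$ even; $i_0$ odd), in each of which the full $q$-adic expansion of $(b+ai)\bmod n$ can be written down explicitly and its weight computed in closed form. Your gcd argument and your observation that $(b+ai)\bmod n=q^{m-1}+iL+CN$ with no wraparound are both correct and match the paper, but without the $(q-1)i_1+i_0$ parametrization (or an equally effective substitute) the parity verification is not done.
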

\begin{proof}
Clearly, $a \equiv(\frac{q}{2}-1)m+\frac{m}{2}-1 \equiv -1\pmod{q-1}$. Hence, $\gcd(a, q-1)=1$. Consequently, 
	\begin{align*}
	\gcd(a, q^{m}-1) &= \gcd(a,\tfrac{q^{m}-1}{q-1})\\
	&= \gcd(\tfrac{q^{{(m-2)/2}}-1}{q-1},\tfrac{q^{m}-1}{q-1})\\
	&= \frac{\gcd(q^{(m-2)/2}-1,q^m-1)}{q-1} \\
	&=\frac{q^{\gcd((m-2)/2, m)}-1}{q-1}\\
	&=1.
	\end{align*}

Let $ i=(q-1)i_{1}+i_{0}$, where $ 0\leq i_{0}\leq q-2$, $ 0\leq i_{1}\leq \frac{q^{(m-2)/2}-1}{q-1}$ and $i_0\in \{0,1\}$ if $i_1=\frac{q^{(m-2)/2}-1}{q-1}$. It is easily verified that
	\begin{equation}\label{EQ::3}
	b+ai \equiv q^{m-1}+(q^{(m-2)/2}-1)i_1+(\tfrac{q-2}{2})(\tfrac{q^m-1}{q-1})i_0+(\tfrac{q^{(m-2)/2}-1}{q-1})i_{0} \pmod{n}.
	\end{equation} 
The rest of the proof is divided into the following three cases.

{\it Case 1}: $i_0=0$ and $ 1\leq i_{1}\leq \frac{q^{(m-2)/2}-1}{q-1}$. By Equation (\ref{EQ::3}), we obtain that 
$$(b+ai) \bmod{n}=q^{m-1}+(q^{(m-2)/2}-1)i_1.$$
It follows that 
$$\wt_q((b+ai) \bmod{n})=1+(q-1)(\tfrac{m-2}2)\equiv 0\pmod{2}.$$ 

{\it Case 2}: $2\leq i_0\leq q-2$ is even and $ 0\leq i_{1}\leq \frac{q^{(m-2)/2}-1}{q-1}-1$. By Equation (\ref{EQ::3}), we get that 
\begin{align*}
(b+ai) \bmod{n}&= q^{m-1}+(q^{(m-2)/2}-1)i_1+(q-1-\tfrac{i_0}{2}) (\tfrac{q^m-1}{q-1})+(\tfrac{q^{(m-2)/2}-1}{q-1})i_0\\
	&=[q^{m/2}+(q-1-\tfrac{i_0}{2})(\tfrac{q^{(m+2)/2}-1}{q-1})+i_1+1]q^{(m-2)/2}+\tfrac{i_0}{2}(\tfrac{q^{(m-2)/2}-1}{q-1})-i_1-1\\
	&=q^{m-1}+(q-1-\tfrac{i_0}{2})q^{m-2}+(q-1-\tfrac{i_0}{2})q^{m-3}+(q^{(m-2)/2}-1-j)q^{(m-2)/2}+j,
\end{align*}
where $j=\tfrac{i_0}{2}(\tfrac{q^{(m-2)/2}-1}{q-1})-i_1-1$. It then follows that 
\begin{align*}
\wt_q( (b+ai) \bmod{n})&=1+2(q-1-\tfrac{i_0}2)+\wt_q(q^{(m-2)/2}-1-j)+\wt_q(j)\\
&=1+(q-1)(\tfrac{m+2}{2})-i_0\\
&\equiv 0\pmod{2}.	
\end{align*}

{\it Case 3}: $1\leq i_0\leq q-2$ is odd and $ 0\leq i_{1}\leq \frac{q^{(m-2)/2}-1}{q-1}$. By Equation (\ref{EQ::3}), we get that 
\begin{align*}
(b+ai) \bmod{n}&= q^{m-1}+(q^{(m-2)/2}-1)i_1+(\tfrac{q-1-i_0}{2}) (\tfrac{q^m-1}{q-1})+(\tfrac{q^{(m-2)/2}-1}{q-1})i_0\\
	&=[q^{m/2}+(\tfrac{q-1-i_0}{2})(\tfrac{q^{(m+2)/2}-1}{q-1})+i_1]q^{(m-2)/2}+(\tfrac{q-1+i_0}{2})(\tfrac{q^{(m-2)/2}-1}{q-1})-i_1\\
	&=q^{m-1}+(\tfrac{q-1-i_0}{2})q^{m-2}+(\tfrac{q-1-i_0}{2})q^{m-3}+(q^{(m-2)/2}-1-j)q^{(m-2)/2}+j,
\end{align*}
where $j=(\tfrac{q-1+i_0}{2})(\tfrac{q^{(m-2)/2}-1}{q-1})-i_1$. It then follows that 
$$\wt_q((b+ai) \bmod{n} )=(q-1)(\tfrac{m}2)+1-i_0\equiv 0\pmod{2}.$$

In summary, $ \wt_{q}((b+ai) \bmod{n})\equiv0\pmod{2} $ for each $ 1\leq i\leq q^{(m-2)/2} $. This completes the proof.
\end{proof}

\begin{lemma}\label{lem:14}
	Let $m \equiv 0\pmod{4}$ and $m\geq 4$. Let $q=2^s\geq 4$, $n=q^m-1$ and $a=\frac{q^{m}-1}{q-1}-2(\frac{q^{(m+2)/2}-1}{q-1})$. Then $\gcd(a,n)=1 $  and 
	$$
\{ai \bmod n:~1\leq i\leq q^{(m-2)/2} \} \subseteq T_{(q,m; 1)}. 
	$$
\end{lemma}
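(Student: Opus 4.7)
The plan is to mirror the two-part structure of the proof of Lemma~\ref{lem:13}: first establish $\gcd(a,n)=1$ by reductions modulo $q-1$ and then modulo $\frac{q^m-1}{q-1}$, and then produce a closed form for $ai\bmod n$ and case-analyze its base-$q$ digits to show the $q$-weight is odd.

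For the coprimality I would reduce $a$ modulo $q-1$ using the digit identities $\frac{q^m-1}{q-1}\equiv m$ and $\frac{q^{(m+2)/2}-1}{q-1}\equiv (m+2)/2\pmod{q-1}$, giving $a\equiv m-(m+2)=-2\pmod{q-1}$, hence $\gcd(a,q-1)=1$ because $q-1$ is odd. Therefore $\gcd(a,n)=\gcd(a,\frac{q^m-1}{q-1})$. Modulo the odd quantity $\frac{q^m-1}{q-1}$, we have $a\equiv -2\cdot\frac{q^{(m+2)/2}-1}{q-1}$, and since $2$ is a unit here, the identity $\gcd\bigl(\frac{q^{\ell_1}-1}{q-1},\frac{q^{\ell_2}-1}{q-1}\bigr)=\frac{q^{\gcd(\ell_1,\ell_2)}-1}{q-1}$ used in Lemma~\ref{lem:13} reduces the problem to $\gcd((m+2)/2,m)$. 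Writing $m=4t$ gives $(m+2)/2=2t+1$ and $\gcd(2t+1,4t)=\gcd(2t+1,t)=1$, so $\gcd(a,n)=1$.

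For the weight claim I would combine $(q-1)a\equiv 2-2q^{(m+2)/2}\pmod n$ with the parametrization $i=(q-1)i_1+i_0$, $0\leq i_0\leq q-2$ and $0\leq i_1\leq\tfrac{q^{(m-2)/2}-1}{q-1}$ (with $i_0\in\{0,1\}$ when $i_1$ is maximal, as in Lemma~\ref{lem:13}), to obtain
\begin{equation*}
ai\equiv 2i_1-2i_1q^{(m+2)/2}+i_0\cdot\tfrac{q^m-1}{q-1}-2i_0\cdot\tfrac{q^{(m+2)/2}-1}{q-1}\pmod n.
\end{equation*}
In the clean sub-case $i_0=0$, $i_1\geq 1$, this collapses to $ai\bmod n=q^{(m+2)/2}(q^{(m-2)/2}-2i_1)+(2i_1-1)$, and the $q$-adic complement identity (Lemma~\ref{lem:6} applied with $A=1$) yields $\wt_q(ai\bmod n)=(q-1)(m/2-1)$, which is odd because $m\equiv 0\pmod 4$ forces $m/2-1$ odd while $q-1$ is odd. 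For $1\leq i_0\leq q-2$ I would split further by the relative sizes of $2i_0$, $2i_1$, and $q-1$, and in each case resolve the borrows from the $-2i_0\cdot\frac{q^{(m+2)/2}-1}{q-1}$ contribution (which cascade across positions $0,\ldots,m/2$) together with their interaction with the $-2i_1q^{(m+2)/2}$ term at position $m/2+1$, mirroring the three-case dissection of Lemma~\ref{lem:13}.

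The main obstacle will be this digit-level bookkeeping in the $i_0\geq 1$ sub-cases: the three signed contributions to $ai\bmod n$ have supports overlapping every position $0,\ldots,m-1$, and the borrows/carries propagate across the critical window around position $m/2+1$ in a manner that depends on both $i_0$ and $i_1$. After each case is resolved I expect the digit sum to take the form $(q-1)\cdot(\text{integer})\pm i_0\pm 1$, whose parity is forced odd by the combined hypotheses $m\equiv 0\pmod 4$, $q=2^s\geq 4$ and $q-1$ odd; this then delivers $\wt_q(ai\bmod n)\equiv 1\pmod 2$, i.e.\ $ai\bmod n\in T_{(q,m;1)}$.
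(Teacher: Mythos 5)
Your coprimality argument and your reduction of $ai \bmod n$ are essentially the paper's: your displayed congruence is exactly the paper's Equation~(\ref{EQ::4}) with the coefficient of $q^{(m+2)/2}$ left uncollected, and your clean subcase $i_0=0$, $i_1\ge 1$ is resolved correctly (the appeal to Lemma~\ref{lem:6} with $A=1$ is outside its stated range $2\le A\le q-1$, but the identity $\wt_q(q^h-1-i)=(q-1)h-\wt_q(i)$ is immediate). The problem is that the heart of the lemma --- the weight computation for $i_0\ge 1$, i.e.\ for all but a thin slice of $1\le i\le q^{(m-2)/2}$ --- is only announced, not carried out, and the plan you sketch points the wrong way. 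The split that makes the bookkeeping tractable is not ``the relative sizes of $2i_0$, $2i_1$ and $q-1$'' but the sign of the coefficient $i_0\tfrac{q^{(m-2)/2}-1}{q-1}-2i_1$ of $q^{(m+2)/2}$ in your own congruence: for $i_0\ge 2$ it is automatically positive, and for $i_0=1$ one compares $2i_1$ with $\tfrac{q^{(m-2)/2}-1}{q-1}$. In the positive case the residue rearranges as $jq^{(m+2)/2}+(q-1-i_0)q^{m/2}+(q-1-i_0)q^{(m-2)/2}+(q^{(m-2)/2}-1-j)$ with $j=i_0\tfrac{q^{(m-2)/2}-1}{q-1}-2i_1-1$, giving weight $2(q-1-i_0)+(q-1)\tfrac{m-2}{2}$; in the remaining case ($i_0=1$, coefficient negative) one passes to $-ai\bmod n$, uses $\wt_q(ai\bmod n)=(q-1)m-\wt_q(-ai\bmod n)$ together with the fact that $j=2i_1-\tfrac{q^{(m-2)/2}-1}{q-1}$ is odd, so $\wt_q(j-1)=\wt_q(j)-1$, and obtains $(q-1)\tfrac{m+2}{2}-2$.

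A further sign that the sketch would not go through as stated: you expect each digit sum to have the form $(q-1)\cdot(\text{integer})\pm i_0\pm 1$ with parity ``forced odd by the hypotheses,'' but such an expression has parity depending on the parity of $i_0$, which ranges over all residues $0,\dots,q-2$, so no hypothesis on $q$ and $m$ alone could force it odd for every $i$. In the correct resolution $i_0$ enters the digit sums only with even coefficient (as in the two expressions above), which is precisely why the parity is uniformly odd; this differs from Lemma~\ref{lem:13}, where the cases are organized by the parity of $i_0$ and the $i_0$-dependence is genuine. Until the $i_0\ge 1$ cases are actually worked out along these lines, the proof is incomplete.
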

\begin{proof}
It is clear that $ a  \equiv m-2(\tfrac{m+2}{2})\equiv -2\pmod{q-1}
$. Hence, $ \gcd(a, q-1)=1 $. Notice that $q=2^s$, $(q^m-1)/(q-1)\equiv 1\pmod{2}$. Consequently, 
\begin{align*}
\gcd(a, q^{m}-1) &= \gcd(a,\tfrac{q^{m}-1}{q-1})\\
&= \gcd(\tfrac{q^{{(m+2)/2}}-1}{q-1},\tfrac{q^{m}-1}{q-1})\\
&= \frac{\gcd(q^{{(m+2)/2}}-1,q^{m}-1)}{q-1}\\
&= \frac{q^{\gcd({(m+2)/2},m)}-1}{q-1}\\
&=1.
\end{align*}

Let $ i=(q-1)i_{1}+i_{0} $, where $ 0\leq i_{0}\leq q-2 $, $ 0\leq i_{1}\leq \frac{q^{(m-2)/2}-1}{q-1}$ and $i_0\in \{0,1\}$ if $i_1=\frac{q^{(m-2)/2}-1}{q-1}$. It is easily verified that
\begin{equation}\label{EQ::4}
ai\equiv [i_{0}(\tfrac{q^{(m-2)/2}-1}{q-1})-2i_{1}]q^{(m+2)/2}+2i_{1}-i_{0}(\tfrac{q^{(m+2)/2}-1}{q-1})	 \pmod{n}.
\end{equation}
Notice that $\tfrac{q^{(m-2)/2}-1}{q-1}\equiv 1\pmod{2}$, $i_{0}(\tfrac{q^{(m-2)/2}-1}{q-1})-2i_{1}=0$ if and only if $i_0$ is even and $i_1=\tfrac{i_0}2( \tfrac{q^{(m-2)/2}-1}{q-1})$. Therefore, if $1\leq i\leq q^{(m-2)/2}$, $i_{0}(\tfrac{q^{(m-2)/2}-1}{q-1})-2i_{1}\neq 0$. The rest of the proof is divided into the following cases.
 
{\it Case 1}: $i_{0}(\tfrac{q^{(m-2)/2}-1}{q-1})-2i_{1}>0$. In this case, $i_0\geq 1$ and $2i_{1}-i_{0}(\tfrac{q^{(m+2)/2}-1}{q-1})<0 $. Therefore,
\begin{align*}
ai \bmod{n}&=[i_{0}(\tfrac{q^{(m-2)/2}-1}{q-1})-2i_{1}-1]q^{(m+2)/2}+q^{(m+2)/2}+2i_{1}-i_{0}(\tfrac{q^{(m+2)/2}-1}{q-1})	\\
&= j q^{(m+2)/2}+(q-1-i_0)q^{m/2}+(q-1-i_0)q^{(m-2)/2}+q^{(m-2)/2}-1-j,
\end{align*}
where $j=i_{0}(\tfrac{q^{(m-2)/2}-1}{q-1})-2i_{1}-1$. It follows that 
\begin{align*}
\wt_q( ai \bmod{n} )&=\wt_q(j)+2(q-1-i_0)+\wt_q( q^{(m-2)/2}-1-j )\\
&=2(q-1-i_0)+(q-1)(\tfrac{m-2}{2})\\
&\equiv 1\pmod{2}.	
\end{align*}

{\it Case 2}: $i_{0}(\tfrac{q^{(m-2)/2}-1}{q-1})-2i_{1}<0$. By $ 0\leq i_{0}\leq q-2 $ and $ 0\leq i_{1}\leq \frac{q^{(m-2)/2}-1}{q-1}$, we deduce that $i_0\in \{0,1\}$. The proof of this case is divided into the following two subcases
%By Equation (\ref{EQ::4}),
%$$-ai \equiv [2i_1-i_{0}(\tfrac{q^{(m-2)/2}-1}{q-1})]q^{(m+2)/2}+i_{0}(\tfrac{q^{(m+2)/2}-1}{q-1})	-2i_1 \pmod{n}.$$
 \begin{itemize}
 \item $i_0=0$ and $1\leq i_1\leq \tfrac{q^{(m-2)/2}-1}{q-1}$. By Equation (\ref{EQ::4}),
 $$(-ai) \bmod{n}=(2i_1-1)q^{(m+2)/2}+q^{(m+2)/2}-1-(2i_1-1).$$
 It follows that 
 \begin{align*}
 \wt_q( ai \bmod{n} )&=(q-1)m -\wt_q((-ai)\bmod{n})\\
 &=(q-1)m-[\wt_q(2i_1-1)+(q-1)(\tfrac{m+2}2)-\wt_q(2i_1-1)]\\
 &=(q-1)(\tfrac{m-2}2)\\
 &\equiv 1\pmod{2}.	
 \end{align*}
\item $i_0=1$ and $\frac{q^{(m-2)/2}-1}{2(q-1)}\leq i_{1}\leq \frac{q^{(m-2)/2}-1}{q-1}$. Then 
$$ (-ai)\bmod{n}=j q^{(m+2)/2}+q^{m/2}+q^{(m-2)/2}-1-(j-1), $$
where $j=2i_1-\tfrac{q^{(m-2)/2}-1}{q-1}$. It follows that 
\begin{align*}
\wt_q( ai \bmod{n} )&=(q-1)m-\wt_q((-ai)\bmod{n} )\\
&=(q-1)m-[\wt_q(j)+1+(q-1)(\tfrac{m-2}2)-\wt_q(j-1)]\\
&=(q-1)(\tfrac{m+2}2)-1+\wt_q(j-1)-\wt_q(j).
\end{align*}
Notice that $1\leq j\leq \tfrac{q^{(m-2)/2}-1}{q-1}$ is odd, then $j=j_0+j_1q+\cdots+j_{(m-4)/2}q^{(m-4)/2}$, where $0\leq j_h\leq q-1$ and $j_0$ is odd. Consequently, $j-1=j_0-1+j_1q+\cdots+j_{(m-4)/2}q^{(m-4)/2}$. Therefore, $\wt_q(j-1)=\wt_q(j)-1$. It then follows that 
$$\wt_q(a i \bmod{n})=(q-1)(\tfrac{m+2}2)-2\equiv 1\pmod{2}.$$ 
 \end{itemize}

In summary, $ \wt_{q}(ai \bmod{n}) \equiv 1\pmod{2}$ for each $ 1\leq i\leq q^{(m-2)/2} $. This completes the proof.
\end{proof}	

By the BCH bound and Lemmas \ref{lem:13} and \ref{lem:14}, we obtain the following results.

\begin{theorem}\label{thm:15}
Let $m\equiv 0\pmod{4}$ and $m\geq 4$. Let $q=2^s\geq 4$ and $n=q^m-1$. Then $d(\C_{(q, m; 0)})\geq  q^{(m-2)/2}+1$ and  $d(\C_{(q, m; 1)})\geq q^{(m-2)/2}+1$	.
\end{theorem}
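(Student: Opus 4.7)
The proof is a direct application of the BCH bound (Lemma \ref{lem:4}) to the arithmetic progressions constructed in Lemmas \ref{lem:13} and \ref{lem:14}. The plan is as follows.

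For $\C_{(q,m;0)}$, I would invoke Lemma \ref{lem:13}, which supplies integers $b = q^{m-1}$ and $a = \frac{q-2}{2}\cdot\frac{q^m-1}{q-1} + \frac{q^{(m-2)/2}-1}{q-1}$ with $\gcd(a,n)=1$ such that the $q^{(m-2)/2}$ consecutive elements $\{(b+ai)\bmod n : 1 \leq i \leq q^{(m-2)/2}\}$ all lie in the defining set $T_{(q,m;0)}$. Setting $h = 1$ and $\delta = q^{(m-2)/2}+1$ in Lemma \ref{lem:4}, so that the range $h \leq i \leq h + \delta - 2$ becomes $1 \leq i \leq q^{(m-2)/2}$, the BCH bound yields $d(\C_{(q,m;0)}) \geq q^{(m-2)/2}+1$.

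For $\C_{(q,m;1)}$, the argument is identical but uses Lemma \ref{lem:14}: with $b = 0$ and $a = \frac{q^m-1}{q-1} - 2\cdot\frac{q^{(m+2)/2}-1}{q-1}$, $\gcd(a,n)=1$, and $\{ai \bmod n : 1 \leq i \leq q^{(m-2)/2}\} \subseteq T_{(q,m;1)}$. The same choice $h=1$, $\delta = q^{(m-2)/2}+1$ in the BCH bound gives $d(\C_{(q,m;1)}) \geq q^{(m-2)/2}+1$.

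Since all the heavy lifting (verifying the coprimality of $a$ with $n$ and the parity of the $q$-weights along the two arithmetic progressions) has already been carried out in Lemmas \ref{lem:13} and \ref{lem:14}, there is no real obstacle here; the proof is essentially a two-sentence citation of those lemmas together with Lemma \ref{lem:4}. The only thing to double-check is the count: the index range $1 \leq i \leq q^{(m-2)/2}$ contains exactly $q^{(m-2)/2}$ values, which is $\delta-1$ with $\delta = q^{(m-2)/2}+1$, matching the hypothesis of the BCH bound.
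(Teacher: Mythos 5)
Your proposal is correct and matches the paper's own proof, which likewise cites Lemma \ref{lem:13} (resp. Lemma \ref{lem:14}) together with the BCH bound of Lemma \ref{lem:4} for $\C_{(q,m;0)}$ (resp. $\C_{(q,m;1)}$). Your explicit choice of $h=1$ and $\delta=q^{(m-2)/2}+1$, and the check that the index range has exactly $\delta-1$ elements, is exactly the bookkeeping the paper leaves implicit.
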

 
 \begin{proof}
 The desired lower bound on the minimum distance of 	$\C_{(q, m; 0)}$ (resp.	$\C_{(q, m; 1)}$ ) follows from the BCH bound and Lemma \ref{lem:13} (resp. Lemma \ref{lem:14}). This completes the proof.  
 \end{proof}
 
 Summarizing the results in Theorems \ref{thm:2} and \ref{thm:8}, we arrive at the following conclusion.

\begin{theorem}\label{thm:16}
	Let $m\geq 3$ be odd, $q=2^s$ and $n=q^m-1$. The following hold.
	\begin{enumerate}
\item The pair of $q$-ary odd-like duadic codes $\C_{(q, m; 0)}$ and $\C_{(q, m; 1)}$ have parameters 
$$[n, (n+1)/2, d\geq q^{(m-1)/2}+2q-1].$$ 	
\item The $q$-ary self-dual code $\overline{\C_{(q, m; i)}}$ has parameters $[n+1, (n+1)/2, d\geq q^{(m-1)/2}+2q-1]$.
\item The $q$-ary self-orthogonal code $\widetilde{\C_{(q, m; i)}}$ has parameters $[n, (n-1)/2, d\geq q^{(m-1)/2}+2q-1]$.
\end{enumerate}
\end{theorem}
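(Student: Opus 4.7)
The plan is to read Theorem \ref{thm:16} as a pure consolidation theorem: all three parts are obtained by gluing the structural assertions of Theorem \ref{thm:2} to the single distance bound of Theorem \ref{thm:8}, so the proof will contain no new computation, only three short transfer arguments. For Result 1, the length $n$, the dimension $(n+1)/2$, and the fact that $\C_{(q, m; 0)}$ and $\C_{(q, m; 1)}$ form a pair of odd-like duadic codes are recorded verbatim in Theorem \ref{thm:2}.1, while the inequality $d(\C_{(q, m; i)}) \geq q^{(m-1)/2}+2q-1$ is exactly the content of Theorem \ref{thm:8}. So I would just assemble these two statements.

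For Result 2, Theorem \ref{thm:2}.2 already declares $\overline{\C_{(q, m; i)}}$ to be self-dual of length $n+1$, which forces the dimension to equal $(n+1)/2$. To pull the distance bound over from $\C_{(q, m; i)}$ to $\overline{\C_{(q, m; i)}}$, I would invoke the standard observation that appending the parity coordinate $c_\infty=-\sum_{i=0}^{n-1}c_i$ can only preserve or increase the Hamming weight of a codeword, so $d(\overline{\C_{(q, m; i)}}) \geq d(\C_{(q, m; i)})$, and Theorem \ref{thm:8} then produces the desired lower bound.

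For Result 3, Theorem \ref{thm:2}.3 supplies the length $n$, the dimension $(n-1)/2$, and the self-orthogonality of $\widetilde{\C_{(q, m; i)}}$. The distance claim would be obtained from the defining-set containment $T_{(q, m; i)} \subset T_{(q, m; i)} \cup \{0\}$, which yields the subcode inclusion $\widetilde{\C_{(q, m; i)}} \subset \C_{(q, m; i)}$ and hence $d(\widetilde{\C_{(q, m; i)}}) \geq d(\C_{(q, m; i)}) \geq q^{(m-1)/2}+2q-1$. There is no substantive obstacle anywhere in this proof: the only non-automatic ingredients are the two monotonicity principles used in Results 2 and 3, namely that extending by a parity-check coordinate cannot decrease the minimum distance, and that enlarging the defining set produces a subcode whose minimum distance can only grow. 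Consequently I expect the write-up to occupy only a few lines, citing Theorems \ref{thm:2} and \ref{thm:8} once each per item.
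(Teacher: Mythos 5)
Your proposal is correct and follows exactly the paper's argument: the dimensions and structural claims are read off from Theorem \ref{thm:2}, the distance bound is imported from Theorem \ref{thm:8}, and the transfer to $\overline{\C_{(q,m;i)}}$ and $\widetilde{\C_{(q,m;i)}}$ uses the same two monotonicity facts (the paper simply states the inequalities $d(\overline{\C_{(q,m;i)}})\geq d(\C_{(q,m;i)})$ and $d(\widetilde{\C_{(q,m;i)}})\geq d(\C_{(q,m;i)})$ as ``clear,'' whereas you spell out the justifications). No gaps.
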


\begin{proof}
The desired results on dimensions of three codes directly follow from Theorem \ref{thm:2}. Clearly, $d(\overline{\C_{(q, m; i)}} )\geq d(\C_{(q, m; i)})$ and  $d(\widetilde{\C_{(q, m; i)}} )\geq d(\C_{(q, m; i)})$. The desired results on the minimum distances of three codes directly follow from Theorem \ref{thm:8}. This completes the proof. 
\end{proof}

When $q=2^s$, $m$ is odd and $n=q^m-1$, the $q$-ary punctured Reed-Muller codes $\GRM(m,[m(q-1)-1]/2)^*$ are duadic codes with parameters $[n, (n+1)/2, (1/2)(q+2)q^{(m-1)/2}-1 ]$ (see \cite{smid}). The following example illustrates that $\C_{(q, m; i)}$ and $\GRM(m,[m(q-1)-1]/2)^*$ are different.

\begin{example}
Let $q=4$, $m=3$ and $n=q^m-1=63$. Let $\beta$ be a primitive element of $\gf(q^3)$ with $\beta^3+\beta^2+\beta+\omega$, where $\omega$ is a primitive element of $\gf(4)$. Then $\C_{(q, m; 0)}$ has generator polynomial $x^{31}+x^{30}+\omega^2*x^{29} + x^{27} + \omega^2 x^{26} + \omega^2 x^{25} + \omega^2 x^{24} + x^{23} +x^{21} + x^{18} + \omega x^{17} + \omega x^{16} + x^{15} + x^{13} + x^{12} + x^{10} + x^9 + x^8 +\omega x^7 + \omega^2 x^6 + \omega x^5 + \omega x^4 + \omega^2 x^3 + \omega x^2 + 1$, and $\C_{(q, m; 1)}$ has generator polynomial $x^31 + \omega x^{29} + \omega^2 x^{28} + \omega x^{27} + \omega x^{26} + \omega^2 x^{25} + \omega x^{24} + x^{23} + x^{22} + x^{21} + x^{19} + x^{18} + x^{16} + \omega x^{15} + \omega x^{14} + x^{13} + x^{10}+ x^8 + \omega^2 x^7 + \omega^2 x^6 + \omega^2 x^5 + x^4 + \omega^2 x^2 + x + 1$, respectively. These two codes have parameters $[63,32,15]$. The best quaternary linear code of length $63$ and dimension $32$ has minimum distance $16$  \cite{Grassl}. The $q$-ary punctured Reed-Muller code $\GRM(3,4)^*$ has parameters $[63,32,11]$. Moreover, the following hold.
\begin{itemize}
\item The extended code $\overline{\C_{(q, m; i)}}$ is a quaternary self-dual code with parameters $[64,32,16]$. The best quaternary linear code of length $64$ and dimension $32$ has minimum distance $17$  \cite{Grassl}. 
\item The code $\widetilde{\C_{(q, m; i)}}$	is a quaternary self-orthogonal code with parameters $[63, 31,16]$. The best quaternary linear code of length $63$ and dimension $31$ has minimum distance $18$ \cite{Grassl}.
\end{itemize}
\end{example}

 Summarizing the results in Theorems \ref{thm:3}, \ref{thm:12} and \ref{thm:15}, we arrive at the following conclusion.

\begin{theorem}\label{thm:18}
	Let $m\geq 2$ be even, $q=2^s$ and $n=q^m-1$. Then the following hold.
	\begin{enumerate}
	\item The $q$-ary LCD code $\C_{(q, m; 0)}$ has parameters $[n, (n+3)/2, d\geq d_{(q, m;0)}]$, where	
	$$d_{(q, m; 0)}=\begin{cases}
 (q+2)/2~&{\rm if}~m=2,\\
 2q^2-2q+2~&{\rm if}~m=6,\\
 q^{(m-2)/2}+2q-1~&{\rm if}~m\geq 10~{\rm and}~m\equiv 2~(~{\rm mod}~4),\\
 q^{(m-2)/2}+1~&{\rm if}~m\equiv 0~(~{\rm mod}~4).
 \end{cases}	
 $$
	\item The $q$-ary LCD code $\C_{(q, m; 1)}$ has parameters $[n, (n-1)/2, d\geq d_{(q, m;1)}]$, where	
	$$d_{(q, m; 1)}=\begin{cases}
 (q+2)/2~&{\rm if}~m=2,\\
% 2q^2-2q+2~&{\rm if}~m=6,\\
 q^{(m-2)/2}+2q-1~&{\rm if}~m\geq 6~{\rm and}~m\equiv 2~(~{\rm mod}~4),\\
 q^{(m-2)/2}+1~&{\rm if}~m\equiv 0 ~(~{\rm mod}~4).
 \end{cases}	
 $$
	\end{enumerate}
\end{theorem}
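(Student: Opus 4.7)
The plan is to prove Theorem~\ref{thm:18} by direct consolidation of the three earlier theorems~\ref{thm:3}, \ref{thm:12}, and \ref{thm:15}, without any new cyclotomic or BCH-type argument. For each of Part~1 and Part~2, I would separate the claims into two parts: the structural data (length, dimension, and the LCD property implicit in the $d \geq$ statement of parameters) and the minimum distance lower bound.

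For the dimension and LCD claims, I would cite Theorem~\ref{thm:3}: Result~2 of that theorem gives $\dim \C_{(q,m;0)} = (n+3)/2$ and Result~3 gives $\dim \C_{(q,m;1)} = (n-1)/2$, together with the fact that both codes are LCD for any even $m \geq 2$. These dimensions are independent of the residue of $m$ modulo $4$, so no further case split is needed here.

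For the minimum distance, I would split on the residue of $m$ modulo $4$. If $m \equiv 2 \pmod{4}$, then Theorem~\ref{thm:12} directly supplies the piecewise bounds $d_{(q,m;0)} = (q+2)/2$ for $m=2$, $d_{(q,m;0)} = 2q^2 - 2q + 2$ for $m=6$, and $d_{(q,m;0)} = q^{(m-2)/2} + 2q - 1$ for $m \geq 10$; similarly for $\C_{(q,m;1)}$ one reads off $(q+2)/2$ when $m=2$ and $q^{(m-2)/2} + 2q - 1$ when $m \geq 6$. If $m \equiv 0 \pmod{4}$, then Theorem~\ref{thm:15} gives the uniform bound $q^{(m-2)/2}+1$ for both $\C_{(q,m;0)}$ and $\C_{(q,m;1)}$. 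Gluing these cases together yields exactly the piecewise definitions of $d_{(q,m;0)}$ and $d_{(q,m;1)}$ stated in the theorem.

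Since every constituent fact is already established, there is essentially no obstacle: the only thing to verify is that the case lists in Theorems~\ref{thm:12} and~\ref{thm:15} together cover every even $m \geq 2$ exactly once, which is clear from $m \equiv 0$ or $2 \pmod 4$ partitioning the even integers. I would close the proof with a single sentence noting this exhaustive case covering and invoking the three cited theorems.
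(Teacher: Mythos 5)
Your proposal is correct and is exactly what the paper does: Theorem~\ref{thm:18} is stated as a summary obtained by combining the dimension and LCD statements of Theorem~\ref{thm:3} with the minimum-distance bounds of Theorem~\ref{thm:12} (for $m\equiv 2\pmod 4$) and Theorem~\ref{thm:15} (for $m\equiv 0\pmod 4$). No new argument is needed, and your observation that the two residue classes partition the even integers completes the case analysis just as the paper intends.
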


When $m$ is even and $q=4$, Shi et al. \cite{STKS2023} raised the following open problem: How to estimate the parameters of $\C_{(q, m; 1)}$? Theorem \ref{thm:18} answers this open question.

\section{Concluding remarks}\label{sec:4} 

The main contribution of this paper is the analysis of the $2^s$-ary Tang-Ding codes. A good lower bound on the minimum distance of the $2^s$-ary Tang-Ding codes was presented. Several infinite families of $2^s$-ary cyclic codes with a square-root-like lower bound were constructed. A summary of the main specific contributions of this paper goes as follows. 
\begin{enumerate}
\item A pair of $2^s$-ary odd-like duadic codes with parameters $$[2^{sm}-1, 2^{sm-1}, d\geq 2^{s(m-1)/2}+2^{s+1}-1]$$ was constructed, where $s\geq 2$ and $m\geq 3$ is odd. By expanding them, an infinite family of $2^s$-ary self-dual codes with parameters $[2^{sm}, 2^{sm-1}, d\geq 2^{s(m-1)/2}+2^{s+1}-1]$ was constructed (see Theorem \ref{thm:16}).
\item An infinite family of $2^s$-ary self-orthogonal codes with parameters 
$$[2^{sm}-1, 2^{sm-1}-1, d\geq 2^{s(m-1)/2}+2^{s+1}-1]$$ was constructed, where $s\geq 2$ and $m\geq 3$ is odd (see Theorem \ref{thm:16}).
\item An infinite family of $2^s$-ary LCD codes with parameters 
$$[2^{sm}-1, 2^{sm-1}+1, d\geq d_{(2^s,m;0)}]$$ was constructed, where $s\geq 2$, $m\geq 2$ is even, and 
$$d_{(2^s, m; 0)}=\begin{cases}
 2^{s-1}+1~&{\rm if}~m=2,\\
 2^{2s+1}-2^{s+1}+2~&{\rm if}~m=6,\\
 2^{s(m-2)/2}+2^{s+1}-1~&{\rm if}~m\geq 10~{\rm and}~m\equiv 2~(~{\rm mod}~4),\\
 2^{s(m-2)/2}+1~&{\rm if}~m\equiv 0~(~{\rm mod}~4).
 \end{cases}	
 $$
 (see Theorem \ref{thm:18}).
 
 \item An infinite family of $2^s$-ary LCD codes with parameters 
$$[2^{sm}-1, 2^{sm-1}-1, d\geq d_{(2^s,m;1)}]$$ was constructed, where $s\geq 2$, $m\geq 2$ is even, and 
$$d_{(2^s, m; 1)}=\begin{cases}
 2^{s-1}+1~&{\rm if}~m=2,\\
  2^{s(m-2)/2}+2^{s+1}-1~&{\rm if}~m\geq 6~{\rm and}~m\equiv 2~(~{\rm mod}~4),\\
 2^{s(m-2)/2}+1~&{\rm if}~m\equiv 0~(~{\rm mod}~4).
 \end{cases}	
 $$
 (see Theorem \ref{thm:18}).
\end{enumerate}

\end{document}